\documentclass[sigconf, nonacm]{acmart}

\usepackage{enumitem,kantlipsum}
\usepackage{amsmath}
\usepackage{multirow}
\usepackage{multicol}
\usepackage{graphicx}
\usepackage{subfigure}
\usepackage{caption}
\newtheorem{property}{Property}

\theoremstyle{definition}
\newtheorem{definition}{Definition}[section]
\theoremstyle{plain}
\newtheorem{example}{Example}[section]
\theoremstyle{plain}
\newtheorem{theorem}{Theorem}[section]
\numberwithin{definition}{section}
\numberwithin{example}{section}
\numberwithin{theorem}{section}

\newcommand{\eat}[1]{}

\usepackage[linesnumbered,ruled]{algorithm2e}

\newcommand\vldbdoi{XX.XX/XXX.XX}
\newcommand\vldbpages{XXX-XXX}
\newcommand\vldbvolume{14}
\newcommand\vldbissue{1}
\newcommand\vldbyear{2020}
\newcommand\vldbauthors{\authors}
\newcommand\vldbtitle{\shorttitle} 
\newcommand\vldbavailabilityurl{URL_TO_YOUR_ARTIFACTS}
\newcommand\vldbpagestyle{plain} 
\newcommand{\blue}[1]{\textcolor{blue}{#1}}
\begin{document}

\title{Co-movement Pattern Mining from Videos}


\author{Dongxiang Zhang}
\affiliation{%
  \institution{Zhejiang University, China}
  }
\email{zhangdongxiang@zju.edu.cn}

\author{Teng Ma}
\affiliation{%
  \institution{Zhejiang University, China}
}
\email{mt0228@zju.edu.cn}

\author{Junnan Hu}
\affiliation{%
  \institution{Zhejiang University, China}
}
\email{jnhu@zju.edu.cn}

\author{Yijun Bei}
\affiliation{%
  \institution{Zhejiang University, China}
}
\email{beiyj@zju.edu.cn}

\author{Kian-Lee Tan}
\affiliation{%
  \institution{National University of Singapore}
}
\email{tankl@comp.nus.edu.sg}

\author{Gang Chen}
\affiliation{%
  \institution{Zhejiang University, China}
}
\email{cg@zju.edu.cn}
\renewcommand{\shortauthors}{Trovato and Tobin, et al.}

\begin{abstract}
Co-movement pattern mining from GPS trajectories has been an intriguing subject in spatial-temporal data mining.  In this paper, we extend this research line by migrating the data source from GPS sensors to surveillance cameras, and presenting the first investigation into co-movement pattern mining from videos. We formulate the new problem, re-define the spatial-temporal proximity constraints from cameras deployed in a road network, and theoretically prove its hardness. Due to the lack of readily applicable solutions, we adapt existing techniques and propose two competitive baselines using Apriori-based enumerator and CMC algorithm, respectively. 

As the principal technical contributions, we introduce a novel index called temporal-cluster suffix tree (TCS-tree), which performs two-level temporal clustering within each camera and constructs a suffix tree from the resulting clusters. Moreover, we present a sequence-ahead pruning framework based on TCS-tree,  which allows for the simultaneous leverage of all pattern constraints to filter candidate paths. Finally, to reduce verification cost on the candidate paths, we propose a sliding-window based co-movement pattern enumeration strategy and a hashing-based dominance eliminator, both of which are effective in avoiding redundant operations.

We conduct extensive experiments for scalability and effectiveness analysis. Our results validate the efficiency of the proposed index and mining algorithm, which runs remarkably faster than the two baseline methods. Additionally, we construct a video database with 1169 cameras and perform an end-to-end pipeline analysis to study the performance gap between GPS-driven and video-driven methods. Our results demonstrate that the derived patterns from the video-driven approach are similar to those derived from groundtruth trajectories, providing evidence of its effectiveness.

\end{abstract}

\maketitle

\vspace{-2mm}

\pagestyle{\vldbpagestyle}
\begingroup\small\noindent\raggedright\textbf{PVLDB Reference Format:}\\
\vldbauthors. \vldbtitle. PVLDB, \vldbvolume(\vldbissue): \vldbpages, \vldbyear.\\
\href{https://doi.org/\vldbdoi}{doi:\vldbdoi}
\endgroup
\begingroup
\renewcommand\thefootnote{}\footnote{\noindent
This work is licensed under the Creative Commons BY-NC-ND 4.0 International License. Visit \url{https://creativecommons.org/licenses/by-nc-nd/4.0/} to view a copy of this license. For any use beyond those covered by this license, obtain permission by emailing \href{mailto:info@vldb.org}{info@vldb.org}. Copyright is held by the owner/author(s). Publication rights licensed to the VLDB Endowment. \\
\raggedright Proceedings of the VLDB Endowment, Vol. \vldbvolume, No. \vldbissue\ %
ISSN 2150-8097. \\
\href{https://doi.org/\vldbdoi}{doi:\vldbdoi} \\
}\addtocounter{footnote}{-1}\endgroup

\vspace{-2mm}

\ifdefempty{\vldbavailabilityurl}{}{
\vspace{.3cm}
\begingroup\small\noindent\raggedright\textbf{PVLDB Artifact Availability:}\\
The source code, data, and/or other artifacts have been made available at \url{https://github.com/Mateng0228/Co-movement-Pattern-Mining-from-Videos}.
\endgroup
}

\section{Introduction}
 As a prevalent analytical challenge in spatial-temporal data mining, the discovery of co-movement patterns involves identifying all groups of objects that move within spatial proximity for a specified time interval. Several co-movement patterns have been proposed, each with distinct constraints in terms of spatial proximity and temporal consecutiveness. For instance, the flock~\cite{DBLP:conf/gis/GudmundssonK06,DBLP:conf/gis/VieiraBT09} and group~\cite{DBLP:journals/dke/WangLH06} patterns require all objects in a group to be enclosed by a disk with a radius $r$. On the other hand, the convoy~\cite{DBLP:journals/pvldb/JeungYZJS08,DBLP:journals/pvldb/OrakzaiCP19,DBLP:conf/cikm/Liu0LLYW21}, swarm~\cite{DBLP:journals/pvldb/LiDHK10}, and platoon~\cite{DBLP:journals/dke/LiBK15} patterns adopt a more relaxed density-based spatial clustering approach to determine spatial proximity. Regarding the temporal dimension, flock and convoy patterns require global consecutiveness, where all timestamps of candidate groups must be consecutive. In contrast, group and platoon patterns tolerate local gaps between consecutive segments. 
 
 This co-movement pattern mining problem has found practical applications in animal behavior studies such as bird migration monitoring~\cite{DBLP:conf/gis/VieiraBT09}. In addition, government agencies can leverage the co-movement pattern mining task to support smart city management, such as detecting traffic congestion at varying levels of granularity~\cite{DBLP:journals/pvldb/OrakzaiCP19}, computing evacuation schedules during disaster~\cite{bhushan2017mining}, security surveillance upon suspicious groups~\cite{yadamjav2020querying}, and assessing the viability of launching public transport services in areas with dense co-movement patterns~\cite{DBLP:conf/cikm/Liu0LLYW21}. Prior works in this domain rely on large-scale GPS trajectories, which are primarily collected and owned by commercial hail-riding or map-service companies. However, in practice, these GPS-based trajectory data not directly accessible to government agencies. Therefore, in this paper, we migrate the data source from GPS sensors to surveillance cameras and present the first work on co-movement pattern mining against an urban-scale video database.  The idea is feasible because with the rapid development of smart cities, numerous surveillance cameras have been deployed throughout the road network. These infrastructures and the video big data they generate are owned by government agencies. Given the availability of computation-efficient and accurate visual inference models, these data sources can be leveraged to support the aforementioned applications of co-movement pattern mining.


Given a corpus of video data captured by the surveillance camera network, we can extract the movement paths of individual objects using trajectory recovery algorithms~\cite{DBLP:conf/gis/LinZH0WL21,DBLP:conf/mobicom/TongL0HH21,DBLP:conf/kdd/YuAYZWL22}. In contrast to GPS trajectories, the extracted path is represented by a sequence of camera identifiers and time intervals $(c_1,[t_1^s,t_1^e])\rightarrow(c_2,[t_2^s,t_2^e])\rightarrow\ldots\rightarrow(c_m,[t_m^s,t_m^e])$, where $c_i$ denotes the camera capturing the moving object and $[t_i^s,t_i^e]$ represents the period during which the object is detected by $c_i$. In this study, we assume that the trajectory extraction has been performed by a black-box algorithm and concentrate on devising efficient algorithms for mining co-movement patterns. Notably, the practice of treating video ingestion as a pre-processing step has also been adopted by recent video mining studies~\cite{DBLP:conf/sigmod/Chen0KY21,DBLP:journals/pvldb/ChenKYY22}.

In our scenario, the spatial location information outside the intervals of $[t_i^s,t_i^e]$ is not available. This property hinders the appliance of original trajectory-based co-movement patterns and calls for problem reformulation. To determine spatial proximity from a video database, we leverage temporal proximity at the same camera as an approximation. In other words, if two moving objects are captured by the same camera within time interval $\Delta_t$, we consider them spatially close to each other. As to the temporal duration for spatial proximity, we replace it with a sequence of $k$ cameras. In other words, we aim to identify platoons of objects that travel together along the same route in the road network.

Figure~\ref{fig:problem-example} illustrates a toy example. Three distinct objects, namely $o_1$, $o_2$, and $o_3$, traverse different routes on the road network. Specifically,  $o_1$ and $o_2$ follow the same trajectory, which is  $c_1\rightarrow c_2\rightarrow c_4\rightarrow c_5$, while $o_3$ travels from route $c_3\rightarrow c_2\rightarrow c_4\rightarrow c_5$. These objects are captured by the cameras at different time intervals. Suppose we set the minimum group size as $2$ and require that the minimum temporal gap at a camera to determine spatial proximity is $5$, and the objects must pass at least $3$ cameras together.  Based on these conditions, we find that the temporal gap between $o_1$ and $o_2$ is too large to form a valid co-movement pattern even though they share the same travel route. The minimum gap between their time intervals at camera $c_4$ is $7$. Conversely, $o_2$ and $o_3$ are spatially close to each other on the path $c_2\rightarrow c_4\rightarrow c_5$, and we can identify $\langle\{o_2,o_3\}, c_2\rightarrow c_4\rightarrow c_5\rangle$ as a valid co-movement pattern.

\begin{figure}[t!]
	\centering
		\includegraphics[width=0.42\textwidth]{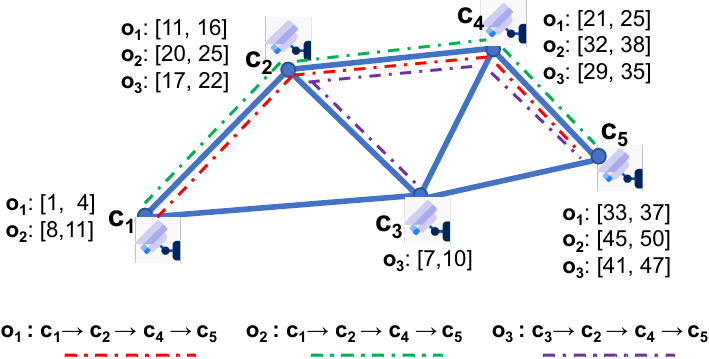}
  \vspace{-3mm}
	\caption{A toy example of co-movement pattern mining from surveillance cameras.}
	\label{fig:problem-example}
 \vspace{-3mm}
\end{figure}

In this paper, we formulate the problem of co-movement pattern mining and formally define the  spatial-temporal proximity constraints arising from video data.  This mining problem is challenging because of the potentially enormous search space that results from a large number of moving objects travelling over a prolonged  period across a road network with a plethora of surveillance cameras.  We establish the hardness of this problem by demonstrating that it is NP-Hard through the reduction of the maximum clique problem to our pattern mining problem.

Given the absence of readily applicable solutions, we draw upon the ideas from prior work and develop two baseline algorithms. The first algorithm is inspired by SPARE~\cite{DBLP:journals/pvldb/FanZWT16}, a parallel and general co-movement pattern miner from trajectories.  We adopt its Apriori-based enumerator to systematically explore the search space. The algorithm can be viewed as \textit{camera-ahead} exploration that enumerates candidate paths with increasing length. The remaining pattern constraints are then leveraged for further pruning. The second baseline extends the Coherent Moving Cluster (CMC) algorithm proposed in~\cite{DBLP:journals/pvldb/JeungYZJS08} for convoy pattern mining, with two tailored improvements for the video-based setting. The algorithm can be viewed as \textit{temporal-ahead} since it searches across the temporal dimension and dynamically updates the candidate patterns.

As the core technical contributions, we propose a new index called temporal-cluster suffix tree (TCS-tree), which performs two-level temporal clustering within each camera.  We represent the trajectory of each moving object as a sequence of cluster IDs and build a suffix tree to facilitate frequent subsequence mining. Based on TCS-tree, we propose a sequence-ahead pruning framework that allows for the simultaneous leverage of all pattern constraints to filter candidate travel paths. Finally, to minimize the verification cost of candidate paths, we propose a sliding-window-based co-movement pattern enumeration strategy and a hashing-based dominance eliminator, both of which effectively eliminate redundant operations.

In summary, we make the following contributions:
\begin{enumerate}
    \item This study pioneers the investigation of co-movement pattern mining from video databases. We formulate this new mining problem and theoretically prove its NP-hardness.
    \item  We propose a novel index called temporal-cluster suffix tree (TCS-tree), an efficient sequence-ahead pruning framework to facilitate pruning, as well as an effective verification scheme based on a sliding-window based candidate enumeration strategy and a hashing-based dominance eliminator.
    \item We conduct extensive experiments to validate the superiority of our proposed algorithm over the baselines. We also construct a video database with $1169$ cameras to  perform an end-to-end pipeline analysis and study the performance gap between GPS-driven and video-driven co-movement pattern miners. 
    \item To benefit the research community, we  make all datasets and implementation code available on github\footnote{\url{https://github.com/Mateng0228/Co-movement-Pattern-Mining-from-Videos}}. Furthermore, we identify several promising research directions for future work.
\end{enumerate}

The remainder of this paper is organized as follows. We formulate the new mining problem and theoretically prove its hardness in Section~\ref{sec:problem}. Section~\ref{sec:related-work} provides a survey of related works on co-movement pattern mining.  The two baseline algorithms are presented in Section~\ref{sec:baselines} and we propose our index and mining framework in Section~\ref{sec:tcs-tree}. In Section~\ref{sec:exp}, we conduct comprehensive experiments for performance evaluation. Finally, we conclude the paper and discuss potential avenues for future research in Section~\ref{sec:conclude}.

\section{Problem Description}\label{sec:problem}

\subsection{Data Model}
In our model for video database, a multitude of surveillance cameras is deployed throughout the urban city.  These cameras capture a vast amount of objects in motion over different time periods. For ease of presentation, we assume that the cameras are not overlapped and each moving person or vehicle can only be captured by one camera. The extension to support overlapped cameras will be discussed in Section 5.5. As the data source solely comprises of camera metadata (e.g., location, resolution, and view angle) and the recorded video content, GPS trajectories of moving objects are not available directly. Nonetheless, we can employ recent trajectory recovery algorithms~\cite{DBLP:conf/gis/LinZH0WL21,DBLP:conf/mobicom/TongL0HH21,DBLP:conf/kdd/YuAYZWL22} to extract the travel routes of moving objects across different cameras. As such, to support spatio-temporal pattern mining, we define the travel path of a moving object $o_i$ in the form of \blue{a} camera id sequence.

\begin{definition} Travel Path \\
We define the travel path of an object $o_i$ as a sequence of camera IDs, arranged in order of the time periods in which $o_i$ was captured: $P_i=(c_1,[s_1,e_1]) \rightarrow (c_2,[s_2,e_2]) \rightarrow \ldots \rightarrow (c_n,[s_n,e_n])$, where $s_j<e_j<s_{j+1}$ and $(c_j,[s_j,e_j])$ implies that $o_i$ appears in $c_j$ in the time interval $[s_j,e_j]$.
\end{definition}
We call $s_j$ the entrance timestamp and $e_j$ the exit timestamp. The time interval of $P_i$ is $[s_1,e_n]$. In cases without the need to mention the temporal dimension, we simplify the path representation as $P_i=c_1\rightarrow c_2\rightarrow \ldots \rightarrow c_n$. 

\begin{definition} Sub-path \\
We say $P_1$ is a sub-path of $P_2$, denoted by $P_1\subseteq P_2$, if the camera sequence of $P_1$ is a subsequence of $P_2$ and the time interval of $P_1$ is contained by $P_2$.
\end{definition}

\blue{
Here is an intuitive explanation regarding the travel path and 
the sub-path. In Figure~\ref{fig:problem-example}, object $o_3$'s travel path $P = c_3\rightarrow c_2\rightarrow c_4\rightarrow c_5$, whose time interval is $[7, 47]$. Notice that $o_3$ has a common sequence $P_{common} = c_2 \rightarrow c_4 \rightarrow c_5$ with other objects. Since $P_{common}$ is a continuous sub-sequence of $P$ and its time interval $[17, 47]$ is within $[7, 47]$, we call $P_{common}$ a sub-path of $P$.
}

From the travel paths of all objects, we can construct a camera network in a data-driven manner. Its vertices contain all disjoint cameras that appear in the travel paths. Two cameras $c_i$ and $c_j$ are connected in the camera network as long as there exists a travel path containing $c_i\rightarrow c_j$. The data model of camera network will be used in the two baseline algorithms.

\subsection{Mining Task Formulation}
To define co-movement patterns over the travel paths of moving objects, we notice that the traditional definition of co-movement patterns involves a parameter $\epsilon$ to determine spatial proximity, where $\epsilon$ is either the radius of the disk or the threshold for density reachability.  
However, in the scenario considered in this paper, the exact \blue{GPS} locations of moving objects are not available, and we must rely on other clues to determine spatial proximity.  In the following definition, we utilize temporal proximity at the sam camera to infer spatial proximity.

\begin{definition} $\epsilon$-reachability at camera $c_j$\\ Two objects $o_1$ and $o_2$ are $\epsilon$-reachable at camera $c_j$ if the gap between their entrance timestamp at  $c_j$ is no greater than $\epsilon$. 
\end{definition}

\blue{
In the example provided in Figure~\ref{fig:problem-example}, the gap between the entrance timestamps of $o_1$ and $o_2$ at camera $c_1$ is 8-1=7. Therefore, for any $\epsilon \geq 7$, both $o_1$ and $o_2$ are $\epsilon$-reachable at camera $c_1$.
}

The measurement of temporal duration, such as ``$k$ consecutive timestamps'' defined in convoy and flock patterns, is not applicable in our scenario because the temporal dimension of travel path $P_i$ only involves a sequence of disjoint time intervals. In our problem definition, we replace the constraint of temporal duration with ``$k$ consecutive cameras in the travel paths'' of the grouped objects.

\begin{definition}  Co-movement Pattern from Videos\\
Given parameters $\epsilon$, $m$ and $k$,  a co-movement pattern $CP$ in this paper is defined as $\langle O_i,P_i\rangle$, where $O_i$ is an object set and $P_i$ is a travel path. They satisfy the following constraints:
\begin{enumerate}
    \item $O_i$ contains at least $m$ objects, i.e., $|O_i|\geq m$.
    \item $P_i$ contains at least $k$ cameras, i.e., $|P_i|\geq k$.
    \item For each object $o\in O_i$ with travel path $P_o$, $P_i$ is a sub-path of $P_o$, i.e., $P_i\subseteq  P_o$.
    \item For any two objects $o_i,o_j\in O_i$ and camera $c_i\in P_i$,  $o_i$ and $o_j$ are $\epsilon$-reachable at camera $c_i$.
\end{enumerate}
\end{definition}


Under certain parameter settings, such as small values for $m$ and $k$ or a large value of $\epsilon$, it is possible that an extensive number of valid patterns may arise. In this paper, we follow previous approach~\cite{DBLP:journals/pvldb/OrakzaiCP19} to study  maximal co-movement pattern mining and define pattern dominance as following.
\begin{definition}  Pattern Dominance\\
A co-movement pattern $CP_1=\langle O_1,P_1\rangle$ is dominated by $CP_2=\langle O_2,P_2\rangle$ if $O_1\subseteq O_2$ and $P_1\subseteq P_2$.
\end{definition}
Our objective is to identify all the co-movement patterns $CP_i=\langle O_i,P_i\rangle$ that are both valid and non-dominated. The hardness of the problem is proved in the following theorem.


\begin{theorem}\label{thm:np-hard}
The problem of maximal co-movement pattern mining in video database is NP-hard.
\end{theorem}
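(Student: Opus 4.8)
The plan is to establish NP-hardness by polynomial-time reduction from the maximum clique problem, as the authors announce in the introduction. Given an undirected graph $G=(V,E)$ and a target clique size $m$, I would construct an instance of co-movement pattern mining whose valid, non-dominated patterns correspond exactly to cliques of size at least $m$ in $G$. The natural encoding maps each vertex $v \in V$ to a moving object $o_v$, and uses the $\epsilon$-reachability constraint to encode adjacency: two objects should be $\epsilon$-reachable at the shared cameras precisely when their corresponding vertices are adjacent in $G$. Because pattern validity (constraint 4 in the definition) requires \emph{pairwise} $\epsilon$-reachability among all objects in $O_i$, an object set forming a valid pattern is forced to be a set of mutually adjacent vertices, i.e. a clique.

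The key steps, in order, are as follows. First I would fix a single common camera sequence of length $k$ (for instance a path $c_1 \to c_2 \to \cdots \to c_k$) and give every object $o_v$ a travel path traversing exactly these $k$ cameras, so that constraints 2 and 3 are automatically satisfiable for any subset of objects — this isolates the spatial-proximity constraint as the only real obstruction. Second, I would assign entrance timestamps at each camera so that $|t_{o_u} - t_{o_v}| \le \epsilon$ holds if and only if $(u,v) \in E$. The delicate point is that $\epsilon$-reachability must agree with adjacency \emph{simultaneously at every one of the $k$ cameras} and must be \emph{consistent across all pairs}: I cannot independently set a pairwise gap for each edge, because the timestamps are per-object scalars (one per camera) and the induced gaps must respect the triangle inequality. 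Third, having argued the construction is polynomial in $|V|+|E|$, I would prove the correspondence in both directions: a clique $C$ of size $\ge m$ yields a valid pattern $\langle \{o_v : v \in C\}, c_1\to\cdots\to c_k\rangle$, and conversely any valid pattern's object set projects to a clique of size $\ge m$; maximality of the pattern then corresponds to maximality of the clique, so deciding whether a pattern of the required size exists decides whether $G$ has a clique of size $m$.

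The main obstacle is precisely the second step: realizing adjacency as a threshold relation on a one-dimensional timestamp assignment is generally impossible, since ``$|x_u - x_v| \le \epsilon$'' defines a unit-interval (indifference) graph structure, and not every graph is such a graph. To circumvent this, I would exploit the freedom of having $k$ independent cameras: rather than encoding all of $G$ at a single camera, I can use the cameras (or an enlarged gadget of cameras, keeping $k$ polynomially bounded) to ``AND'' together several threshold conditions, so that a pair of objects is $\epsilon$-reachable at all cameras only when their vertices are adjacent. Concretely, one clean design assigns to each vertex $v$ a distinct identifier and chooses per-camera timestamps so that the pair $(o_u,o_v)$ violates the $\epsilon$-gap at some camera exactly when $(u,v) \notin E$; since constraint 4 is a conjunction over cameras, a single violated camera suffices to exclude a non-edge. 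Verifying that such a timestamp schedule exists, is computable in polynomial time, and respects the ordering requirement $s_j < e_j < s_{j+1}$ from the Travel Path definition is the technical heart of the argument.

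Finally, I would note that membership-style concerns are irrelevant here — the claim is NP-hardness, not NP-completeness — so it suffices to reduce an NP-hard problem (maximum clique) to the decision version of our mining task, namely ``does there exist a valid co-movement pattern with $|O_i| \ge m$ and $|P_i| \ge k$?'' The reduction above does exactly this, and since maximal pattern mining must in particular detect the existence of such a pattern, hardness of the mining problem follows.
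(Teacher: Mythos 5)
Your proposal takes essentially the same route as the paper: a reduction from maximum clique in which every vertex becomes an object, all objects share one common camera sequence, and the conjunction of $\epsilon$-reachability over many cameras is used to ``AND'' together per-pair threshold tests --- exactly the device you propose to escape the unit-interval-graph obstruction. The only piece you leave open (the explicit timestamp schedule) is supplied in the paper by dedicating camera $c_j$ to vertex $j$ over $\eta=\max(|V|,k)$ cameras and setting the entrance time of $o_i$ at $c_j$ to $(2\epsilon+1)j+\epsilon$ if $i=j$, to $(2\epsilon+1)j$ if $i\neq j$ and $(i,j)\in E$, and to $(2\epsilon+1)j-\epsilon$ otherwise; then a non-adjacent pair $(a,b)$ has gap $2\epsilon$ at camera $c_b$ while adjacent pairs have gap at most $\epsilon$ everywhere, and the $(2\epsilon+1)$ spacing between consecutive cameras keeps the intervals ordered. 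This is exactly the ``clean design'' you sketched, so your plan is correct and matches the paper's proof.
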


\begin{proof}
We prove the theorem by reducing the maximum clique problem to our pattern mining problem.

Let $G=(V,E)$ be a graph with $n$ vertices. We conduct the reduction as follows. Each vertex $v_i\in G$ is uniquely mapped to a moving object $o_i$ in our problem. Let $\eta=\max(|V|, k)$ and we make all moving objects  pass through the same camera sequence $c_1\rightarrow c_2\rightarrow\ldots\rightarrow c_\eta$. The entrance time of object $o_i$ at camera $c_j$ is defined as
\begin{equation}
  t_{ij} =
    \begin{cases}
      (2\epsilon+1)*j-\epsilon & \text{$i\neq j\;\wedge\; $ $i$ and $j$ are not connected in $G$}\\
      (2\epsilon+1)*j & \text{\text{$i\neq j\;\wedge\; $ $i$ and $j$ are  connected in $G$}}\\
      (2\epsilon+1)*j+\epsilon & \text{$i=j$}
    \end{cases}       
\end{equation}
With the above settings, we know that if $v_a$ and $v_b$ are not connected in $G$, the entrance time of $o_a$ at camera $c_b$ is $(2\epsilon+1)*b-\epsilon$ and the entrance time of $o_b$ at camera $c_b$ is $(2\epsilon+1)*b+\epsilon$. These two objects are not $\epsilon$-reachable at camera $c_b$ and will not be co-located within a valid co-movement pattern. On the other hand, if $v_a$ and $v_b$ are connected in $G$, we show that $|t_{aj}-t_{bj}|\leq \epsilon$ for all cameras $c_1\rightarrow c_2\rightarrow\ldots\rightarrow c_\eta$: 
\begin{enumerate}[leftmargin=*]
    \item If $j=a$ or $j=b$, $|t_{aj}-t_{bj}|\leq |(2\epsilon+1)*j+\epsilon-(2\epsilon+1)*j|\leq \epsilon$. 
    \item If $j\neq a$ and $j\neq b$, $|t_{aj}-t_{bj}|\leq |(2\epsilon+1)*j-\epsilon-(2\epsilon+1)*j|\leq \epsilon$. 
\end{enumerate}
Therefore, if there exists a maximum clique with size $m$, we can also find a co-movement pattern covering $m$ objects through $\eta$ consecutive cameras. Obviously, this co-movement pattern is maximal.
\end{proof}

\blue{
\begin{example}
We use the example in Figure~\ref{fig:problem-example} to explain the results of co-movement patterns under different parameter settings. Suppose the query parameters are set to $m=2$, $k=3$, $\epsilon=12$, all the valid patterns that satisfy the pattern constraints include $CP_1:\langle\{o_1,o_2\}, c_1 \rightarrow c_2 \rightarrow c_4 \rangle$, $CP_2:\langle\{o_1,o_2\}, c_2 \rightarrow c_4 \rightarrow c_5 \rangle$, $CP_3:\langle\{o_1,o_3\}, c_2 \rightarrow c_4 \rightarrow c_5 \rangle$, $CP_4:\langle\{o_2,o_3\}, c_2 \rightarrow c_4 \rightarrow c_5 \rangle$, $CP_5:\langle\{o_1,o_2,o_3\}, c_2 \rightarrow c_4 \rightarrow c_5 \rangle$ and $CP_6:\langle\{o_1,o_2\}, c_1 \rightarrow c_2 \rightarrow c_4 \rightarrow c_5 \rangle$. After removing those dominated patterns, the results returned to the user include $CP_5:\langle\{o_1,o_2,o_3\}, c_2 \rightarrow c_4 \rightarrow c_5 \rangle$ and $CP_6:\langle\{o_1,o_2\}, c_1 \rightarrow c_2 \rightarrow c_4 \rightarrow c_5 \rangle$. 
If the user adjusts the parameter $\epsilon$ to $6$ with a tighter constraint on the $\epsilon$-reachability, the results set for the new query with $m=2$, $k=3$ and $\epsilon=6$ includes only one pattern: $\langle\{o_2,o_3\}, c_2 \rightarrow c_4 \rightarrow c_5 \rangle$.
\end{example}
}

\subsection{End-to-End Mining Pipeline}\label{sec:pipeline}
The complete mining pipeline that starts from the input of surveillance videos \blue{includes} three major steps:

\noindent \textbf{Step 1: Pre-processing}. The goal is to extract the cross-camera trajectories from the surveillance cameras deployed in a road network. This step can be achieved by adopting existing trajectory recovery algorithms from large-scale videos~\cite{DBLP:conf/gis/LinZH0WL21,DBLP:conf/mobicom/TongL0HH21,DBLP:conf/kdd/YuAYZWL22} or multi-camera multi-target tracking models~\cite{DBLP:conf/cvpr/TangN0YBWKAH19,DBLP:conf/cvpr/HsuHWCLH19,DBLP:conf/cvpr/HeHYHWG20}. The latter is more accurate but consumes higher computation cost. The trajectory extraction can be viewed as a pre-processing step and the extracted trajectories can be used to support co-movement pattern mining or video database queries. 

\noindent \textbf{Step 2: Online Index Construction}. After extracting the camera sequences for all moving objects, we can build index in an online manner to facilitate co-movement pattern mining. For example, we will build a temporal-clustering suffix tree, which relies on the query threshold $\epsilon$.

\noindent \textbf{Step 3: Pattern Mining}. Users can apply the mining algorithms proposed \blue{in} this paper to retrieve all the co-movement patterns satisfying the spatial-temporal proximity constraints.
\blue{For parameter setting, $\epsilon$ can be set from $1$-$4$ minutes to take into account the effect of traffic lights. Parameters $m$ and $k$ can be determined by specific applications.}
Since the pattern mining is an exploratory process, if the query user is not satisfied with the results, he/she can adjust the parameters and proceed to step 2 for the next round of pattern mining.

For ease of reference, the notations frequently used in the paper are summarized in Table~\ref{table:notation}.
\begin{table}[h!]
\begin{center}
\vspace{-2mm}
\caption{Notation Table.}
\vspace{-2mm}
\small
\begin{tabular}{|c|p{6.5cm}|} \hline
$c_i$, $o_i$ & A surveillance camera $c_i$ and a moving object $o_i$\\ \hline
$P_i$ & The travel path of object $o_i$ \\ \hline
$O_i$ & An object set  \\ \hline
$\epsilon$ & The threshold for temporal closeness \\ \hline
$m$ & The minimum group size for co-movement pattern \\ \hline
$k$ & The minimum route length for co-movement pattern \\ \hline
$CP$ & A co-movement pattern \\ \hline
$TC$ & A temporal cluster represented as triplet $(O,l,r)$\\ \hline
 $\mathbb{TC}_{c_i}$ & The temporal clusters in camera $c_i$ \\ \hline
\end{tabular}
\label{table:notation}
\end{center}
\vspace{-5mm}
\end{table}

\section{Related Work}\label{sec:related-work}

Among the co-movement patterns defined over GPS trajectories, the flock~\cite{DBLP:conf/gis/GudmundssonK06,DBLP:conf/gis/VieiraBT09} and  convoy~\cite{DBLP:journals/pvldb/JeungYZJS08,DBLP:journals/pvldb/OrakzaiCP19,DBLP:conf/cikm/Liu0LLYW21} patterns require the candidate groups to appear in $k$ consecutive timestamps. Their difference lies in the definition of spatial proximity. Flock requires objects in the same cluster to be within a disk-region of diameter less than parameter $\epsilon$, while convoy uses density-based spatial clustering. In the mining algorithm design, the first step is to perform clustering at each timestamp. Vieira et al. employed the grid index approach to identify disks~\cite{DBLP:conf/gis/VieiraBT09}, while convoy pattern mining relies on density-based clustering. Given that clustering is often computationally expensive and can become a performance bottleneck, several techniques have been proposed to alleviate this overhead.  Jeung et al. adopt trajectory simplification to convert trajectories into segments and perform segment clustering to reduce computation overhead.  \cite{DBLP:conf/cikm/Liu0LLYW21} improves clustering efficiency with the help of spatial partitioning. Alternatively, Orakzai et al. proposed the $k$/2-hop algorithm, which performs DBSCAN only at a small subset of timestamps, thereby quickly pruning objects that have no potential to form a convoy~\cite{DBLP:journals/pvldb/OrakzaiCP19}.

Different from flock and convoy, the group~\cite{DBLP:journals/dke/WangLH06}, swarm~\cite{DBLP:journals/pvldb/LiDHK10} and platoon~\cite{DBLP:journals/dke/LiBK15}  patterns have more relaxed constraints on the pattern duration. Their techniques of mining are of the same skeleton. The main idea of mining is to grow an object set from an empty set in a depth-first manner. Throughout this construction process, various pruning techniques are employed to discard unnecessary branches.  Group pattern mining~\cite{DBLP:journals/dke/WangLH06} uses a VGgraph to guide the pruning of false candidates, while swarm mining~\cite{DBLP:journals/pvldb/LiDHK10}  relies on two more pruning rules called backward pruning and forward pruning. The platoon mining algorithm in~\cite{DBLP:journals/dke/LiBK15} takes advantage of a prefix table structure to steer the depth-first search.

There also exist several parallel co-movement pattern mining algorithms that adopt distributed computing for performance boosting. In~\cite{DBLP:conf/gis/OrakzaiDC15,DBLP:conf/mdm/OrakzaiCP16}, Orakzai et al. adopt MapRuduce framework for distributed convoy pattern mining. In~\cite{DBLP:journals/pvldb/FanZWT16}, a general and parallel framework called SPARE is proposed to handle multiple types of co-movement patterns based on Spark. SPARE achieves workload balance by partitioning objects into fine granular stars. For each partition, an Apriori-based Enumerator is adopted to mine the co-movement patterns. \cite{DBLP:journals/geoinformatica/OrakzaiPC21} adopts a divide and conquer strategy for distributed convoy pattern mining. Each node runs DBSCAN on a partition and mines local convoys, which are then combined to produce the global result. 

The differences between GPS-based and video-based co-movement pattern mining are summarized as shown in Table~\ref{table:comp-gps-video}.
\begin{table}[h!]
\setlength{\abovecaptionskip}{2mm}
\begin{center}
\caption{Differences between GPS-based and video based co-movement pattern mining.}
\label{table:comp-gps-video}
\small
\begin{tabular}{|c|p{2cm}|p{2.5cm}|} \hline
 & Trajectory-based  & Video-based \\ \hline
Element in trajectory & $\langle lat, lng, time\rangle$  & $\langle c_j,[s_j,e_j]\rangle$ \\ \hline
Trajectory representation & synchronized and fine-grained  & asynchronized and coarse-grained\\ \hline
Proximity Determination & spatial clustering  & temporal clustering within a camera \\ \hline
Duration Determination & consecutive timestamps & $k$ consecutive paths in the camera network \\ \hline
Performance bottleneck & spatial clustering & candidate verification \\ \hline
\end{tabular}
\end{center}
\vspace{-2mm}
\end{table}

For co-moving pattern mining based on trajectory streaming,  CoMing~\cite{DBLP:journals/pvldb/ChenGFMJG19,DBLP:conf/sigmod/FangGPCMJ20} leverages Apache Flink for streaming-based processing, builds spatial indexes to accelerate clustering, and employs pruning techniques to eliminate unnecessary verification. In~\cite{DBLP:journals/kais/SheinPI20}, the group pattern is relaxed to allow membership withdrawal or re-join as long as some participators stay connected for all time intervals. An incremental discovery solution is developed to retrieve the evolving companion efficiently from trajectory streaming data.

Recently, video-based pattern mining has emerged as an interesting database research topic. In~\cite{DBLP:conf/gis/LinZH0WL21,DBLP:conf/mobicom/TongL0HH21,DBLP:conf/kdd/YuAYZWL22}, the task of snapshot clustering is studied to recover trajectories from urban-scale videos. The objective is to cluster snapshots of detected objects that are visually similar and spatially coherent. The items within each cluster are considered to belong to the same vehicle.  They are sorted by the associated timestamp to form a trajectory. In~\cite{DBLP:conf/sigmod/Chen0KY21,DBLP:journals/pvldb/ChenKYY22}, Chen et al. retrieve video segments that satisfy user-provided constraints on the spatial or temporal relationships among objects of interest, as well as other conditions on the object labels.

\section{Two Baseline Algorithms}\label{sec:baselines}
In this section, we present two baseline algorithms for co-movement pattern mining, by adapting the ideas from trajectory-based solutions to the context of video database.

\subsection{CMC Algorithm}
The Coherent Moving Cluster (CMC) algorithm~\cite{DBLP:journals/pvldb/JeungYZJS08,DBLP:conf/gis/OrakzaiDC15} represents a simple and general baseline technique for mining convoy patterns. The algorithm operates by performing spatial clustering on the moving objects at each timestamp, followed by a search for all valid patterns across the temporal dimension. To this end, a global buffer of candidate patterns is maintained. At each timestamp, each cluster is intersected with the partial patterns in the global buffer, thus generating new patterns that adhere to the constraints of $k$ consecutive timestamps and minimum group size.

In this paper, we present a competitive baseline, which builds upon CMC with two key improvements tailored for video-based \blue{settings}. Firstly, we substitute the traditional spatial clustering approach with a temporal clustering approach that ensures objects within the same cluster are $\epsilon$-reachable. The temporal clustering is performed independently within each camera.  Secondly, we replace the global buffer with distributed buffers across the cameras. At each timestamp, when accessing a temporal cluster within camera $c_i$, we intersect it with only the patterns in the local buffer of $c_i$, thus reducing the search space and enhancing computational efficiency. The newly generated patterns are  propagated to the buffers in the neighboring cameras for future exploration.

\subsubsection{Temporal Clustering}
Recall that we utilize $\epsilon$-reachability  in the temporal dimension as an approximate means of representing spatial proximity. For the set of objects captured by camera $c_i$, we propose a lightweight temporal clustering approach that ensures objects within the same cluster are $\epsilon$-reachable. Additionally, its linear complexity can prevent the clustering process from becoming a performance bottleneck, 

As demonstrated in Algorithm~\ref{alg:clustering}, we access the moving objects in ascending order of their entrance time into camera $c_i$, utilizing two anchors, $l$ and $r$ to establish the boundaries of each cluster.   To store the resulting clusters of camera $c_i$, we maintain a data structure $\mathbb{TC}_{c_i}=\{TC_1,TC_2,\ldots,TC_i,\ldots\}$, where each temporal cluster $TC_i=\left(O_i,\left[l_i, r_i\right]\right)$ is associated with an object set $O_i$ and time interval $[l_i,r_i]$. By employing the anchor $r$ to determine the right boundary, all moving objects within temporal distance $\epsilon$ are grouped into a single cluster. We exclude small clusters that contain fewer than $m$ objects.

\begin{algorithm}[t!]
\SetAlgoNoEnd \SetAlgoNoLine
\caption{Temporal Clustering}\label{alg:clustering}
\small
$\mathbb{TC}_{c_i} \leftarrow \emptyset;$ $TC \leftarrow (\emptyset, [0,0]);$ $r\leftarrow 1;$ $l\leftarrow 1$\;
$S \leftarrow$ object records in order of entrance time in camera $c_i$\;
\While{$r \leq |S|$}{
    \eIf{$S[r].time - S[l].time \leq \epsilon$}{
		Enqueue $S[r].obj$ into $TC.O$\; 		
		$TC.r \leftarrow S[r].time;$ $r++$\;
    }{
	\If{$|TC|\geq m$}{ $\mathbb{TC}_{c_i} \leftarrow \mathbb{TC}_{c_i} \bigcup TC$}
	\While{$l\neq r$}{
		\If{$S[r].time - S[l].time \leq \epsilon$}{
			\textbf{break}\;			
		}{
            Dequeue $S[l].obj$ from $TC.O$\;
			$l++;$ $TC.l \leftarrow S[l].time$
		}		
	}
    }
}
\lIf{$|TC|\geq m$}{ $\mathbb{TC}_{c_i} \leftarrow \mathbb{TC}_{c_i} \bigcup TC$}
\textbf{return} $\mathbb{TC}_{c_i}$\;
\end{algorithm}

\begin{figure}[h!]
	\centering
		\includegraphics[width=0.42\textwidth]{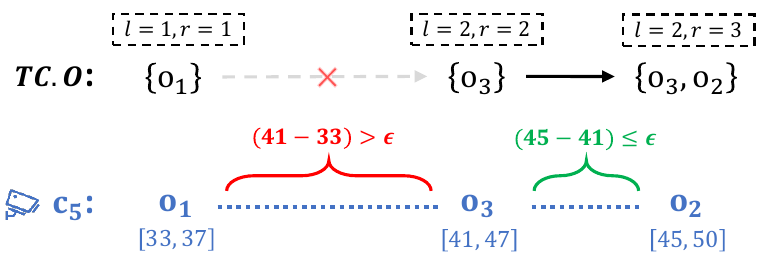}
	\caption{A toy example of Temporal Clustering.}
	\label{fig:index-graph-tc}
\end{figure}
\blue{
\begin{example}
Figure~\ref{fig:index-graph-tc} presents a toy example illustrating the execution process of Algorithm~\ref{alg:clustering} on objects passing through camera $c_5$ in Figure~\ref{fig:problem-example} (assuming $m=2$, $\epsilon=6$).
The algorithm initially obtains object records in the order of their entrance time into camera $c_5$, namely, objects $o_1$, $o_3$, and $o_2$.
Subsequently, each object undergoes processing by adjusting the anchor $r$. When $r=1$, the first object $o_1$ is added to $TC.O$, and the corresponding time interval is updated. When $r=2$, the next object $o_3$ is processed. Since the time gap between the entrance times of $o_3$ and $o_1$ exceeds $\epsilon$, $o_1$ cannot continue to be retained in $TC.O$. The algorithm removes the information related to $o_1$ from $TC$, shifts the anchor $l$, and simultaneously incorporates the relevant information of $o_3$ into $TC$. When $r=3$, the final object $o_2$ is processed. It is observed that $o_2$ and $o_3$ are $\epsilon$-reachable. Therefore, $o_3$ is included in $TC$ while retaining $o_2$. Finally, since the number of objects in $TC.O$ is greater than $m$, $TC$ is added to the result $\mathbb{TC}_{c_5}$, thereby completing the algorithm.
\end{example}
}

Upon examining the temporal clustering, we can draw three observations. First, as each object is accessed by the anchor points at most twice, the time complexity of temporal clustering is linear to the number of objects captured by camera $c_i$. Assuming there are $N$ objects, each with an average path length of $L$, the complexity of temporal clustering on all cameras is $O(NL)$. Second, any two objects within the same temporal cluster are guaranteed to be $\epsilon$-reachable, as specified by line $5$ in Algorithm~\ref{alg:clustering}. Finally, these temporal clusters are not necessarily disjoint in the temporal dimension. Thus, it is possible that an object in a camera belongs to multiple temporal clusters.

\subsubsection{Candidate Pattern Enumeration}
The search algorithm explores the temporal dimension for candidate pattern enumeration. We sort the temporal clusters $TC_i=\left(O_i,\left[l_i, r_i\right]\right)$ from all cameras in ascending order of their starting time $l_i$. Let $V_{in}^i$ denote the incoming neighbors of camera $c_i$ in the camera network and $V_{out}^i$ denote its outgoing neighbors. We maintain a local buffer for each camera to store candidate patterns whose path ends at any node in $V_{in}^i$. Suppose we are presently examining the temporal cluster $TC_i=\left(O_i,\left[l_i, r_i\right]\right)$ derived from camera $c_u$, we perform object intersection to generate new candidate patterns. This step is similar to the original CMC algorithm which expands the time interval of a pattern from $t$ timestamps to $t+1$ timestamps. For each candidate pattern $\langle O_b, P_b\rangle$ in the local buffer of camera $c_u$, we examine the size of $O_b\bigcap O_i$. If $|O_b\bigcap O_i|\geq m$, we generate a new pattern $\langle O_b\bigcap O_i, P_b\rightarrow c_u\rangle$ and propagate it to the outgoing neighbor cameras in  $V_{out}^u$. If this is a valid pattern, we also push it into result set $\mathbb{R}$. The algorithm terminates once all temporal clusters have been evaluated. Finally, we remove the dominated patterns in $\mathbb{R}$ by building \blue{an} inverted index to quickly examine the subset relationship between object sets.

\begin{algorithm}[t!]
\SetAlgoNoEnd  \SetAlgoNoLine
\caption{CMC Algorithm}\label{alg:cmc}
\small
$TC\leftarrow$ Sort clusters $TC_i=\left(O_i,\left[l_i, r_i\right]\right)$ from all cameras by $l_i$\;
\For{each $TC_i$ associated with camera $c_u$}{
    \For{each candidate pattern $\langle O_b,P_b \rangle \in Buf_{c_u}$}{
        \If{$|O_b \bigcap O_i|\geq m$}{
            \If{$|P_b| \geq k-1$}{
                $\mathbb{R} \leftarrow \mathbb{R} \bigcup\; \langle O_b \bigcap O_i, P_b \rightarrow c_u \rangle$\;
            }
            \For{each camera $c_{o} \in V_{out}^{u}$ }{
                $Buf_{c_{o}} \leftarrow Buf_{c_{o}} \bigcup\; \langle O_b \bigcap O_i, P_b \rightarrow c_u \rangle$\;
            }
        }
    }
        \For{each camera $c_{o} \in V_{out}^{c_u}$ }{
                $Buf_{c_{o}} \leftarrow Buf_{c_{o}} \bigcup\; \langle O_i, c_u \rangle$\;
            }
}
Remove dominated patterns in $\mathbb{R}$\;
\textbf{return} $\mathbb{R}$\;
\end{algorithm}

\begin{figure}[h!]
	\centering
		\includegraphics[width=0.45\textwidth]{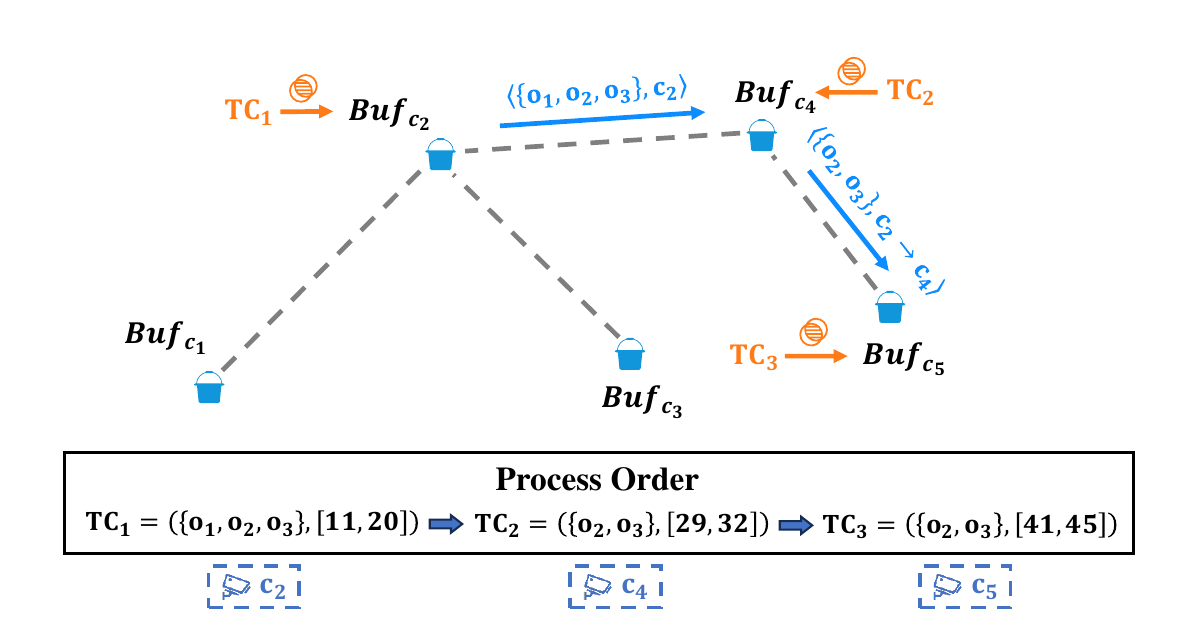}
	\caption{An example of CMC Algorithm.}
	\label{fig:index-graph-cmc}
\end{figure}
\blue{
\begin{example}
We use an example to briefly illustrate our CMC Algorithm. Figure~\ref{fig:index-graph-cmc} explains how Algorithm~\ref{alg:cmc} operates on the toy data model presented in Figure~\ref{fig:problem-example} (assuming $m=2$, $k=3$ and $\epsilon=6$).
After applying the Temporal Clustering algorithm, we can get three temporal clusters: $TC_1=(\{o_1,o_2,o_3\},[11,20])$ for camera $c_2$, $TC_2=(\{o_2,o_3\},[29,32])$ for camera $c_4$, and $TC_3=(\{o_2,o_3\},[41,45])$ for camera $c_5$. We first construct a camera network using all travel paths and assign a local buffer $Buf_{c_u}$ to each camera $c_u$.
Then, Algorithm~\ref{alg:cmc} can be executed to process each temporal cluster in ascending order of starting time. Specifically, we begin by processing $TC_1$ that belongs to camera $c_2$, where we take the intersection with elements in $Buf_{c_2}$. Since the result is an empty set, only $\langle \{o_1,o_2,o_3\}, c_2 \rangle$ generated from $TC_1$ is forwarded to the next camera $c_4$'s local buffer $Buf_{c_4}$. For $TC_2$, the intersection of $TC_2$ and $Buf_{c_4}$ yields $\{o_2,o_3\}$, allowing us to extend $\langle \{o_1,o_2,o_3\}, c_2 \rangle$ to $\langle \{o_2,o_3\}, c_2 \rightarrow c_4 \rangle$ and pass it to $Buf_{c_5}$. Finally, we process $TC_3$, taking the intersection with recently updated $Buf_{c_5}$, resulting in the final result $\langle \{o_2,o_3\}, c_2 \rightarrow c_4 \rightarrow c_5 \rangle$.
\end{example}
}

\subsection{Apriori Based Enumerator}
Our second baseline is inspired by SPARE~\cite{DBLP:journals/pvldb/FanZWT16}, which is a parallel and general framework to mine various co-movement patterns. Within SPARE, an Apriori-based enumerator is introduced to facilitate space pruning. To adopt the idea, we construct the combinatorial space using cameras because we can leverage the ontology of camera network to avoid enumerating all the combinations of cameras. In addition, it is straightforward that the anti-monotonicity property also holds: 
\begin{property}
Let $\mathbb{O}_i$ be the set of $\epsilon$-reachable objects in all the cameras of $P_i$. If $P_1$ is a sub-path of $P_2$, then $\mathbb{O}_1\supseteq \mathbb{O}_2$.
\end{property}
\noindent With the property, if we cannot find a group of objects with size $m$ that are $\epsilon$-reachable in all cameras of $P_i$, then all the expanded paths from $P_i$ can be pruned.

We present our Apriori-based enumerator in Algorithm~\ref{alg:apriori}. The general idea is to enumerate candidate paths in increasing number of cameras,  evaluate their validity by assessing the constraints related to object size $m$ and spatial proximity $\epsilon$, and eliminate any illegal or dominated outcomes. The algorithm starts from the first layer to construct candidate patterns with only one camera. In the initialization stage (lines $1$-$4$), temporal clustering via Algorithm~\ref{alg:clustering} is employed to perform temporal clustering within each camera $c_i$.  As long as its output $\mathbb{TC}_{c_i}$ is not empty, the camera is considered as a candidate path with $len=1$ and pushed into a buffer $Q$. These candidate cameras constitute the bottom layer of the lattice for Apriori-based enumeration. 

In the subsequent layers, we adhere to the Apriori algorithm's procedure to enumerate candidate paths of length $n+1$ derived from their two \blue{sub-paths} of length $n$. Given  $P=c_1\rightarrow c_2\rightarrow \ldots \rightarrow c_{n}\rightarrow c_{u}$, we impose the condition that its two \blue{sub-paths}, $P_1=c_1\rightarrow c_2\rightarrow \ldots \rightarrow c_{n}$ and $P_2=c_2\rightarrow \ldots \rightarrow c_{n}\rightarrow c_{u}$, must be present in the output of the previous layer (lines $6$-$9$). If either of these \blue{sub-paths} is absent, path $P$ can be pruned in accordance with the anti-monotonicity property. We construct $\mathbb{O}_{new}$ to maintain a group of valid object sets that appear in both $P_1$ and $P_2$, i.e., we intersect each pair of object set $O_1\bigcap O_2$ and discard it if its size is smaller than $m$ or it is dominated by other object set (lines $11$-$13$). If $\mathbb{O}_{new}$ is not empty, we find a new candidate path $P_{new}=c_1\rightarrow c_2\rightarrow \ldots \rightarrow c_{n}\rightarrow c_{u}$ and push $(P_{new},\mathbb{O}_{new})$ into $Q$. If $|P|\geq k$, we also add the valid candidate patterns into result set $\mathbb{R}$. When the buffer $Q$ becomes empty, the Apriori-based enumerator terminates and we remove dominated patterns in $\mathbb{R}$. This step is the same as the one employed in the CMC algorithm.




\begin{algorithm}[h!]
\SetAlgoNoEnd  \SetAlgoNoLine
\caption{Apriori-Based Enumerator}
\small
\label{alg:apriori}
\For{each camera $c_i$}{
    Perform temporal clustering  in $c_i$ using Algorithm~\ref{alg:clustering}\;
    \small
    $P_i\leftarrow c_i;$ $\mathbb{O}_i\leftarrow\mathbb{TC}_{c_i}$\;
    Add $(P_i, \mathbb{O}_i)$ into buffer $Q$\;
}
\While{$Q$ is not empty}{
    \For{each path $P_1=c_1\rightarrow c_2\rightarrow\ldots\rightarrow c_n\in Q$}{
         \For{each $c_o\in V_{out}^{n}$}{
            Let $P_2= c_2\rightarrow c_3\rightarrow \ldots\rightarrow c_n\rightarrow c_o$\;
            \If{$Q$ contains $P_2$}{
                $\mathbb{O}_{new}\leftarrow \{O_1\bigcap O_2|O_1\in \mathbb{O}_1, O_2\in\mathbb{O}_2\}$\;
                \If{$|O_i|<m$ or $O_i\subseteq O_j\in \mathbb{O}_{new}$}{
                    Remove $O_i$ from $\mathbb{O}_{new}$  \;
                }
                \If{$|\mathbb{O}_{new}|>0$}{
                    $P_{new}= P_1\rightarrow c_o$\;
                    $Q'\leftarrow Q'\bigcup \;(P_{new},\mathbb{O}_{new})$\; 
                    \If{$|P_{new}|\geq k$}{
                        \For{each $O_i\in \mathbb{O}_{new}$}{
                            $\mathbb{R}\leftarrow \mathbb{R}\bigcup\; \langle P_{new},O_i\rangle $\;
                        }
                    }
                }
            }
         }
   }
  $Q\leftarrow Q'$;  $Q'\leftarrow \emptyset$\;
}
Remove dominated patterns in $\mathbb{R}$\;
\textbf{return} $\mathbb{R}$\;
\end{algorithm}

\blue{
\begin{example}
We present an example in Figure~\ref{fig:apriori-example} to explain the procedure of our Apriori-Based Enumerator. We still use the data model  in Figure~\ref{fig:problem-example} and the query parameters are set to $m=2$, $k=3$, and $\epsilon=6$.
We begin by executing Algorithm~\ref{alg:clustering} to obtain all temporal clusters, including $(\{o_1,o_2,o_3\},[11,20])$ from camera $c_2$, $(\{o_2,o_3\},[29,32])$ from camera $c_4$, and $(\{o_2,o_3\},[41,45])$ from camera $c_5$. We treat these cameras as candidate paths of length 1, forming the bottom layer of the lattice for Apriori-based enumeration. Subsequently, we use the topology of the camera network and the anti-monotonicity property to iteratively concatenate candidate paths starting from the bottom layer. Specifically, as the candidate paths $c_2$ and $c_4$ are adjacent within the camera network and their shared object set $\{o_2,o_3\}$ satisfies $m$, we concatenate $c_2$ and $c_4$ to create the new path $c_2 \rightarrow c_4$. Likewise, candidate path $c_4 \rightarrow c_5$ can also be generated from the bottom layer. For the candidate paths $c_2 \rightarrow c_4$ and $c_4 \rightarrow c_5$ in the second layer, it's not difficult to recognize that they can be linked in terms of the path and their intersection of corresponding object sets results in $\{o_2,o_3\}$. Therefore, we connect them to form $c_2 \rightarrow c_4 \rightarrow c_5$.
Since there is only one path ($c_2 \rightarrow c_4 \rightarrow c_5$) in the third layer, we cannot generate any paths of length $4$ and the  enumeration process is terminated. Finally, we obtain the pattern $\langle \{o_2,o_3\}, c_2 \rightarrow c_4 \rightarrow c_5 \rangle$ corresponding to the path $c_2 \rightarrow c_4 \rightarrow c_5$.
\end{example}
}

\begin{figure}[h!]
	\centering
		\includegraphics[width=0.45\textwidth]{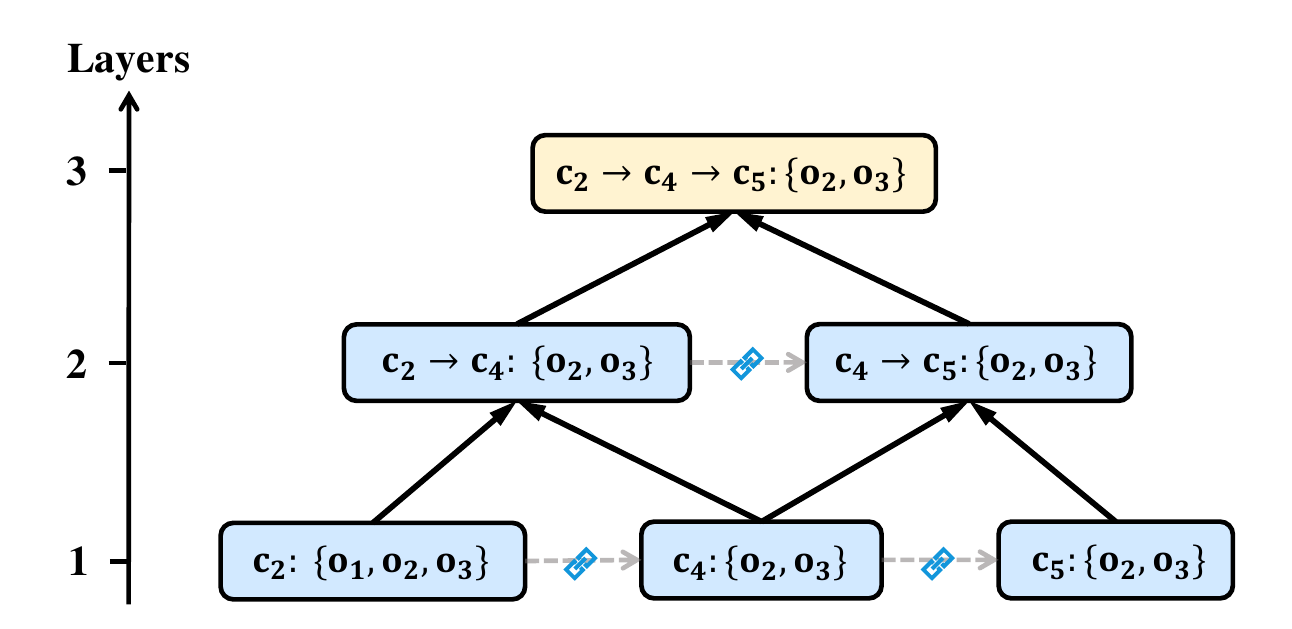}
	\caption{An example of candidate enumeration in Algorithm~\ref{alg:apriori}.}
	\label{fig:apriori-example}
\end{figure}

\section{Sequence-Ahead Mining Framework}~\label{sec:tcs-tree}
In this section, we first deliver the basic idea of sequence-ahead mining framework and the construction of temporal-cluster suffix tree (TCS-tree) to produce  co-movement pattern candidates. Among the candidates, we also propose an efficient sliding-window based  verification algorithm to remove false positives and a hashing-based dominance elimination strategy to retain maximal patterns. \blue{At the end of the section, we discuss the extended support of the current data model for the scenario of overlapped cameras.}

\subsection{Basic Idea}\label{sec:base-suffix-tree}
A co-movement pattern involves three constraints that allow us to devise pruning rules, including the $\epsilon$-reachable constraint for spatial proximity, the minimum group size $m$ and the minimum path length $k$. The two baseline algorithms adopt different strategies to construct the search space for candidate enumeration. More specifically, the Apriori-based enumerator can be viewed as camera-ahead exploration and its search space consists of all the possible paths with length $k$. During the candidate enumeration, the constraints on parameters $\epsilon$ and $m$ are further utilized for pruning. The CMC algorithm is temporal-ahead because it explores the search space from the temporal dimension. For each appearance of an object within a camera, it joins with the candidate patterns that are $\epsilon$-reachable. The validness is further verified using constraints on parameters $m$ and $k$.

In this paper, we propose a sequence-ahead mining framework with the objective of utilizing multiple constraints for candidate enumeration. A straightforward solution is to model each object $o_i$ as a sequence of camera ids in ascending order of the entrance time into the camera. Then, we can utilize the constraints of minimum group size $m$ and minimum path length $k$ simultaneously, and perform frequent subsequence mining (FSM) for candidate generation. A subsequence is a candidate path only if it satisfies two conditions, namely its length is at least $k$, and its support is at least $m$. Since FSM is a fundamental data mining problem that has been adequately addressed, we can readily adopt existing frequent subsequence miners for candidate enumeration. For example, we can construct a suffix-tree for the sequences and use the index to identify frequent subsequences. For each candidate path, we can further apply the constraint of $\epsilon$-reachability to eliminate false positives.  Finally, the patterns that are not dominated will be returned to the query user as mining results.



The aforementioned solution still has great room for improvement. In the following, we first present the concept of meta-clusters and propose a new index called temporal-clustering suffix tree (TCS-tree) that intends to integrate the three constraints for filtering in the frequent subsequence mining process. To reduce redundant verification overhead, we also propose a sliding-window based candidate enumeration strategy and a hashing based dominance eliminator.

\subsection{TCS-tree Construction}
In order to further refine the efficacy of the sequence-ahead mining framework and minimize the occurrence of false positive candidates, we propose an innovative strategy that involves extending the sequence of camera ids to incorporate temporal information and generate a more nuanced representation. The original sequence is constructed using the one-to-one mapping function $F(o_i)=c_j$, with each element in the sequence representing a camera id. However, this representation is limited in that it is unable to differentiate between two distinct travel paths that share the same sequence of camera ids but possess substantially different starting times. To address this limitation, we intend to partition the temporal dimension in a camera $c_j$ into multiple disjoint intervals $I_j^1, I_j^2\ldots$, and extend $F(o_i)$ to a new one-to-one mapping $G(o_i)=I_j^t$, thereby representing each object as a sequence of $I_j^t$. This approach enables us to incorporate the three constraints of $\epsilon$-reachability, minimum group size $m$, and minimum path length $k$ during the process of frequent subsequence mining.
%
%
Notably, it is possible for an object $o_i$ to be captured at multiple time periods at the same camera $c_j$. In this case, we can simply use $o_i^1$, $o_i^2,\ldots$ to represent different occurrence periods of $o_i$ such that each $o_i^u$ becomes the input of mapping function $G(\cdot)$. For ease of presentation, we still use $G(o_i)$ as the default notation unless further clarification is required.

It is obvious that smaller intervals are more powerful to distinguish false positives, i.e., if two objects share a travel path $P_i=c_1\rightarrow \ldots\rightarrow c_n$ but they are not $\epsilon$-reachable, it is more likely for these two objects to be projected into different intervals.  Therefore, we formulate the construction of function $G(o_i)$ as minimizing the total size of all intervals $\sum_{j,t} |I_j^t|$, with the following two constraints:
\begin{enumerate}
    \item $G(o_i)$ is a one-to-one mapping in camera $c_j$, i.e., each object $o_i$ is uniquely mapped to an interval $I_j^t$ at camera $c_j$.
   \item If $o_u$ and $o_v$ are $\epsilon$-reachable in camera $c_j$, they are mapped to the same $I_j^t$. This requirement guarantees  that there is no missing valid pattern.
\end{enumerate}

Note that we cannot directly apply the temporal clustering algorithm in Algorithm~\ref{alg:clustering} because it is possible that an object belongs to multiple clusters. This violates the constraint of one-to-one mapping requirement for $G(o_i)$ and we cannot represent an object with a sequence of cluster ids and then apply frequent subsequence mining. To resolve the issue, we propose a two-level temporal clustering algorithm and represent each object as a sequence of cluster ids, based on which we build a temporal cluster suffix tree (TCS-tree) for frequent subsequence mining.

In the first level of clustering, we apply Algorithm~\ref{alg:clustering} to generate temporal clusters that are $\epsilon$-reachable. In the second level, we progressively merge the temporal clusters that are overlapped. Two temporal clusters $TC_i$ and $TC_j$ are merged into a meta-cluster as long as they contain common objects, i.e., $TC_i.O_i\bigcap TC_j.O_j\neq \emptyset$. In implementation, we can scan the temporal clusters along the temporal dimension and treat the first cluster as a meta-cluster. When we access a temporal cluster $TC_i$, it is merged into previous meta-cluster $MC$ if they are overlapped. Otherwise,  $TC_i$ and $MC$ are disjoint and we treat $TC_i$ as a new meta-cluster to merge its subsequent temporal clusters. The process terminates when all clusters have been examined.

\begin{example}
We use a toy example to explain the results of meta-clusters and the construction of TCS-tree. As depicted in Figure~\ref{fig:index-graph-tcs}, there are $5$ objects that move in a road network with cameras $c_1$ to $c_6$. Each object is captured by part of these cameras.  To construct TCS-tree, we can perform temporal clustering within each camera and merge the overlapped clusters into a meta-cluster. For example, in camera $c_5$, there are $3$ temporal clusters $\{o_1,o_2,o_3\},\{o_3,o_4\},\{o_5\}$ and we can merge them into two meta-clusters $\{o_1,o_2,o_3,o_4\},\{o_5\}$. Each object is uniquely mapped to a meta-cluster $MC_j^i$ in camera $c_i$. Therefore, as shown in the figure, we can construct a new data model for each object and represent it as a sequence of meta-clusters. Then, we can build a TCS-tree on top of these sequences whose elements are meta-clusters. TCS-tree is essentially a suffix tree. Each unique suffix in the meta-cluster sequences is stored as a single node in a suffix tree. Each leaf node contains the starting position of the suffix it represents. Originally, there are $8$ frequent subsequences. With the fine-grained representation, the number is reduced to $4$.
\end{example}
\begin{figure}[h!]
	\centering
		\includegraphics[width=0.47\textwidth]{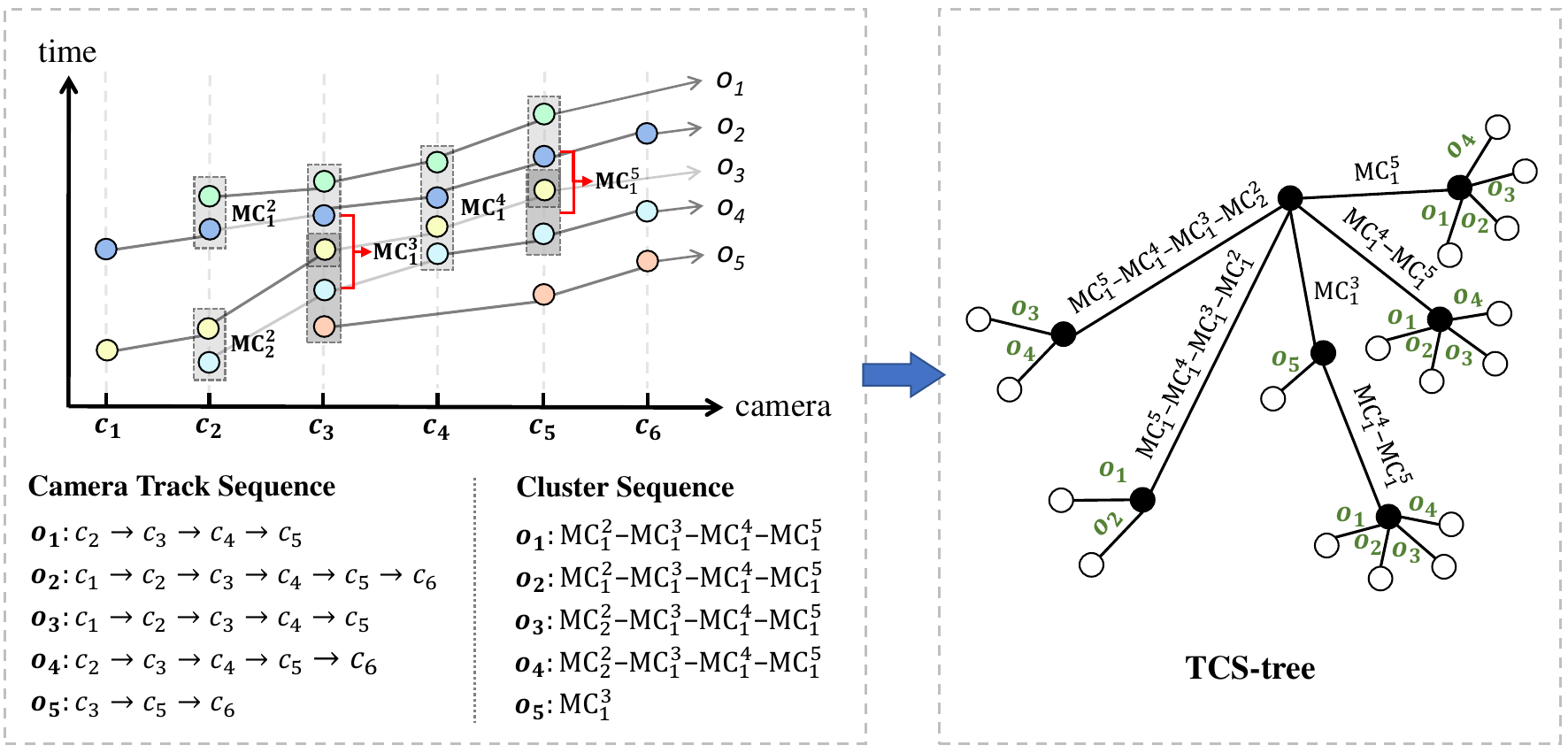}
	\caption{A toy example of the construction of TCS-Tree.}
	\label{fig:index-graph-tcs}
\end{figure}


In the following, we analyze the properties of  $G(o_i)$, which maps an object $o_i$ into a meta-cluster generated by $c_j$. First, the mapping is one-to-one because the temporal dimension is split into disjoint meta-clusters and these meta-clusters cover all objects captured by the camera. In this way, each object can only be assigned to one of the meta-clusters.  Second, for any two objects $o_1$ and $o_2$ that are $\epsilon$-reachable, they will appear in the same meta-cluster. This is also straightforward because the temporal gap between two disjoint meta-clusters is at least $\epsilon$. If $o_1$ and $o_2$ are located at different meta-clusters, they are unable to be $\epsilon$-reachable.



Finally, we can also show that the meta-clusters generate the minimum interval size.

\begin{lemma}
 The meta-clusters minimize $\sum_{j,t} |I_j^t|$.
\end{lemma}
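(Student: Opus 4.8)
The plan is to observe that the objective $\sum_{j,t}|I_j^t|$ and both constraints decompose over cameras, so it suffices to argue optimality within a single camera $c_j$ and sum. Fix such a camera and let $t_1 \le t_2 \le \cdots \le t_N$ be the entrance timestamps of the objects it captures, with consecutive gaps $g_i = t_{i+1}-t_i$. I would reduce the whole minimization to a choice of ``cut points'' among these gaps.

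First I would show that in any feasible mapping $G$ every interval is a \emph{contiguous run} of these points. Since the intervals are required to be pairwise disjoint and $G$ assigns every object to some interval, whenever two points $t_a < t_b$ share an interval, every $t_c$ with $t_a < t_c < t_b$ must share it too (otherwise the interval containing $t_c$ would overlap the one containing $t_a$ and $t_b$). Hence each interval is the maximal run of consecutive indices it contains, and—taking the interval tight, which only decreases $|I_j^t|$ while preserving feasibility—its length equals the sum of its internal gaps. Summing, the total size equals $\sum_i g_i$ minus the lengths of the gaps that fall \emph{between} two distinct intervals, so minimizing $\sum_{j,t}|I_j^t|$ is equivalent to \emph{maximizing} the total length of the cut gaps.

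Next I would characterize the cuttable gaps. By the definition of $\epsilon$-reachability, $t_i$ and $t_{i+1}$ are $\epsilon$-reachable exactly when $g_i \le \epsilon$, and constraint~(2) then forces them into a common interval; thus a gap may be cut only if $g_i > \epsilon$. Conversely, cutting every gap with $g_i > \epsilon$ is feasible, and this is the step demanding the most care: constraint~(2) quantifies over \emph{all} pairs, not just consecutive ones, so I must use monotonicity of the sorted timestamps to promote ``each cross-run gap exceeds $\epsilon$'' to ``any two points in different runs differ by more than $\epsilon$,'' which guarantees no $\epsilon$-reachable pair is separated and hence no valid pattern is lost. It follows that the feasible cut sets are precisely the subsets of $\{\, i : g_i > \epsilon \,\}$, and the unique size-minimizing choice cuts all of them.

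Finally I would identify this optimal partition with the meta-clusters. The first clustering level groups consecutive points whose gap is at most $\epsilon$, and the second level merges any overlapping clusters, so the resulting meta-clusters are exactly the maximal runs of points separated by gaps exceeding $\epsilon$—i.e. the partition that cuts every gap larger than $\epsilon$. Matching this against the optimal cut set established above shows the meta-clusters attain the minimum of $\sum_{j,t}|I_j^t|$, completing the proof.
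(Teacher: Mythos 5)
Your proof is correct, and it rests on the same crux as the paper's: consecutive objects in entrance-time order with gap at most $\epsilon$ must share an interval, and transitively this welds each meta-cluster into a single interval under any feasible mapping. The paper packages this as a short contradiction --- a mapping $G'$ with smaller total size must split some meta-cluster, and walking along the chain of $\epsilon$-reachable neighbours inside that meta-cluster exposes a pair $(o_u,o_v)$ that is $\epsilon$-reachable yet separated by $G'$, violating constraint (2). You instead give a direct characterization: after decomposing over cameras and observing that disjointness forces each interval to be a contiguous run of sorted timestamps, you rewrite the objective as the total span minus the sum of cut gaps, show the feasible cut sets are exactly the subsets of $\{\,i : g_i > \epsilon\,\}$, and conclude that cutting all such gaps is both feasible and optimal, which is precisely the meta-cluster partition. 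Your route is longer but makes explicit two steps the paper leaves implicit: that the meta-cluster partition is itself feasible (the paper establishes this in the text preceding the lemma rather than inside the proof), and that the objective is monotone under refinement among feasible mappings, so the finest feasible partition is the minimizer. The step you rightly flag as delicate --- promoting ``each cut gap exceeds $\epsilon$'' to ``any two points in different runs differ by more than $\epsilon$'' via monotonicity of the sorted timestamps --- is handled correctly, and is the mirror image of the paper's remark that the temporal gap between two disjoint meta-clusters is at least $\epsilon$.
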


\begin{proof}
We prove this by contradiction. Suppose there exists another mapping function $G'(o_i)$ that generates smaller intervals than the meta-clusters. Then, we can find a pair of objects $(o_i,o_j)$ that belong to the same meta-cluster $MC$, but $G'(o_i)\neq G'(o_j)$.  If $o_i$ and $o_j$ are $\epsilon$-reachable, we finish the proof because the new mapping contradicts with the requirement that $G'(o_i)=G'(o_j)$ if $o_i$ and $o_j$ are $\epsilon$-reachable. If $o_i$ and $o_j$ are not directly $\epsilon$-reachable, based on the construction of meta-clusters, we can find a sequence of objects $S=o_i\rightarrow o_{i1}\rightarrow o_{i2}\rightarrow\ldots o_j$ such that the neighboring objects are $\epsilon$-reachable. Suppose $o_u\rightarrow o_v$ is the first pair  of objects in $S$ satisfying $G'(o_i)=G'(o_u)\neq G'(o_v)$. Then, we find a pair of objects $(o_u,o_v)$ that are $\epsilon$-reachable but $G'(o_u)\neq G'(o_v)$. This also leads to contradiction for mapping function $G'(\cdot)$.
\end{proof}

\subsection{Optimized Verification and Dominance Elimination}
Assisted by TCS-tree, we can perform frequent subsequence mining to quickly identify candidate paths with at least $k$ cameras and traversed by at least $m$ objects. To support efficient verification on these candidate paths, we propose a sliding-window based candidate enumeration strategy and a hashing-based dominance eliminator.

\subsubsection{Sliding-window Based Candidate Enumeration} 
Let $MC_1\rightarrow MC_2\rightarrow \ldots \rightarrow MC_n$ ($n\geq k$) be a candidate sequence of meta-clusters generated by TCS-tree. Each meta-cluster  $MC_i$ is derived from a camera $c_i$ and merged by multiple temporal clusters. An unoptimized verification procedure has to check all combinations of temporal clusters from different cameras, perform object intersection among the temporal clusters, and preserve result sets with at least $m$ objects. The idea of CMC algorithm can be adopted to improve efficiency. We can traverse the cameras and maintain a buffer to store partial intersection results. When we access camera $c_i$, we only need to perform set intersection between the temporal clusters in $c_i$ and the candidate object sets in the buffer. Nonetheless, the procedure requires heavy object set intersection operators. Even though the operation can be optimized to $O(m \, \mathrm{log}(n/m))\, (m \leq n)$~\cite{DBLP:conf/baeza2010fast}, this requires non-trivial implementation skills. In addition, the temporal clusters are overlapped with high redundancy. If we can save cost from unnecessary computation cost incurred by the redundant data, the performance can be further improved.

Algorithm~\ref{alg:sw-verification} depicts our sliding-window based verification algorithm. The sequence of meta-clusters $MC_1\rightarrow MC_2\rightarrow\ldots$ are processed in order. Let $MC_i$ be the current meta-cluster in camera $c_i$. It essentially consists of a sequence of objects ordered by the entrance time to the camera and these objects are grouped into temporal clusters. We access these objects in order and maintain a sliding window $W=[o_l,o_r]$ using the data structure of queue, where $o_r$ is the current object and $o_l$ is the first object that belongs to the same temporal cluster with $o_r$. Then, enqueue and dequeue operations are used for adding or removing an element from the sliding window, respectively.  We also maintain $\mathbb{C}$ to store the candidate patterns generated during the verification process. For each candidate pattern $CP_i=\langle O_i,P_i\rangle$, we construct a queue $Q[CP_i]$ to store the common objects in $O_i$ and the objects in the sliding window $[o_l,o_r]$. The elements in $Q[CP_i]$ will be dynamically updated when the sliding window moves.

When we access the current object $o_r$, we retrieve the set of candidate patterns in $\mathbb{C}$ whose object set contains $o_r$. This step can be efficiently conducted by building an inverted index for the candidate patterns in an online fashion. For each pattern $CP_i$ containing $o_r$, we append $o_r$ to $Q[CP_i]$ with complexity $O(1)$. If $o_r$ is the end of a temporal cluster and $|Q[CP_i]|\geq m$, we find and store the new candidate pattern $\langle Q[CP_i], P_i\rightarrow c_i\rangle$. The sliding window is also updated when $o_r$ switches to a new temporal cluster. We iteratively remove $o_l$ from the sliding window until $o_l$ and $o_r$ belong to the same temporal cluster. When $o_l$ is removed from the sliding window, we also dequeue $o_l$ from $Q[CP_i]$ for those candidate patterns $CP_i$ containing $o_l$.

\begin{example}
We present an example in Figure~\ref{fig:example-sw} to explain the procedure of sliding-window based verification. In this example, we set $m=2$, $k=3$ and the objects with in a  temporal cluster are grouped. Suppose we are now processing meta-cluster $MC_4$ and the candidate patterns after processing $MC_1$, $MC_2$ and $MC_3$ include $CP_1$ and $CP_2$. The objects in $MC_4$ are accessed in the order of $o_3\rightarrow o_4\rightarrow o_2\rightarrow o_5$ and they form two temporal clusters \blue{$\{o_3,o_4,o_2\}$ and $\{o_4,o_2,o_5\}$}. When the first two objects $o_3$ and $o_4$ are accessed, the sliding window is expanded to $[o_3,o_4]$. By checking the inverted index, we also append $o_3$ to $Q[CP_1]$ and $o_4$ to $Q[CP_2]$. When $o_2$ is accessed, it is enqueued into the sliding window, which becomes $[o_4,o_2]$. Since $o_2$ is contained in $CP_2$, we add $o_2$ to $Q[CP_2]$, resulting in a new candidate pattern $CP'=\langle \{o_4,o_2\},c_2\rightarrow c_3\rightarrow c_4\rangle$. Finally, when the last object $o_5$ is accessed, $o_3$ is dequeued from the sliding window and $o_5$ is enqueued. $Q[CP_2]$ is updated and another new candidate pattern $CP''=\langle \{o_4,o_2,o_5\},c_2\rightarrow c_3\rightarrow c_4\rangle$ is generated.

\end{example}

\begin{figure}[h!]
	\centering
		\includegraphics[width=0.45\textwidth]{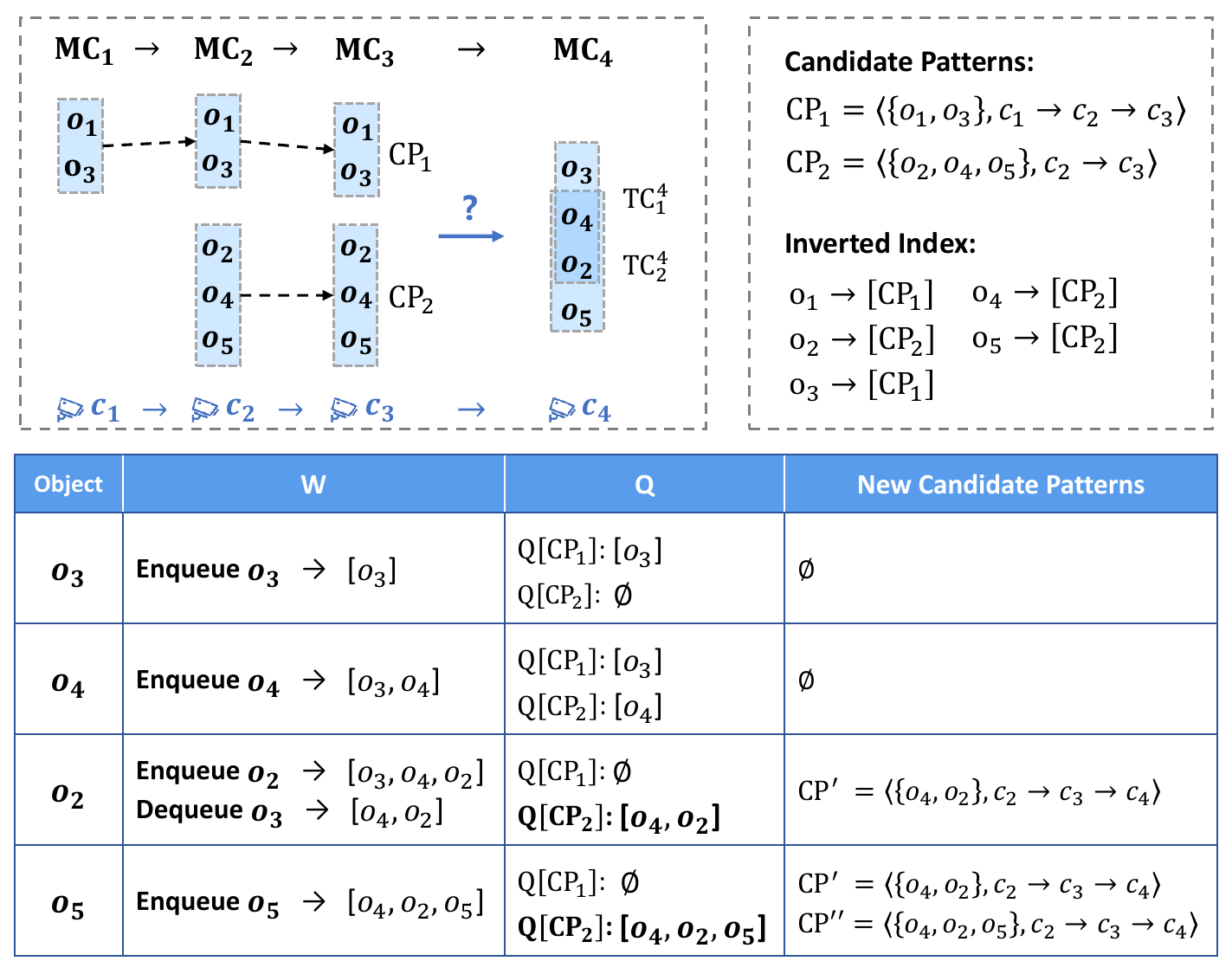}
	\caption{An example of Sliding-window Based Verification}
	\label{fig:example-sw}
\end{figure}

Compared with traditional CMC algorithm that requires frequent set intersection, our sliding-window based verification strategy has two advantages. First, we require no explicit set intersection. The objects are enqueued and dequeued with $O(1)$ complexity to update the common objects shared by the sliding window and candidate patterns. Second, each object is accessed only once, even when it is contained in multiple temporal clusters. This property is useful to handle overlapped temporal clusters with high redundancy and can save considerable computation cost.


\begin{algorithm}[t!]
\SetAlgoNoEnd  \SetAlgoNoLine
\caption{Sliding-window Based Verification}\label{alg:sw-verification}
\small
$\mathbb{C} \leftarrow \emptyset$\;
\For{each $MC_i$ with camera $c_i$ in the frequent sequence}{
    $W \leftarrow$ \textbf{queue}; $Q \leftarrow \emptyset$; $\mathbb{C}_{new} \leftarrow \emptyset$\;
    \For{each candidate pattern $CP_i \in \mathbb{C}$}{
        $Q \left[CP_i\right] \leftarrow$ \textbf{queue}\;
    }
    $S \leftarrow$ objects of $MC_i$ in order of entrance time in $c_i$\;
    \For{each object $o_r$ in $S$}{
        Enqueue $o_r$ into $W$\;
        \For{each $CP_i \in \mathbb{C}$ containing $o_r$}{
            Enqueue $o_r$ into $Q \left[CP_i\right]$\;
        }
        \If{$o_r$ is the end anchor of a temporal cluster}{
            \For{each $CP_i \in \mathbb{C}$}{
                \If{$\left|Q\left[CP_i\right]\right| \geq m$}{
                    $\mathbb{C}_{new} \leftarrow \mathbb{C}_{new} \cup \left \langle Q\left[CP_i\right],P_i\rightarrow c_i\right \rangle$\;
                }
            }
            \While{$o_l$ and $o_r$ in different temporal clusters}{
                Dequeue $o_l$ from $W$ and $Q$\;
            }
        }
    }
    \For{each candidate pattern $CP_i \in \mathbb{C}$}{
        \lIf{$\left| P_i \right| \geq k$}{$\mathbb{R} \leftarrow \mathbb{R} \cup CP_i$}
    }
    \For{each $TC_j \in MC_i$ in the frequent sequence}{
        $\mathbb{C}_{new} \leftarrow \mathbb{C}_{new} \cup \left \langle TC_j, c_i\right \rangle$\;
    }
    $\mathbb{C} \leftarrow \mathbb{C}_{new}$\;
    }
\textbf{return} $\mathbb{R}$\;
\end{algorithm}


    

\subsubsection{Hashing-based Dominance Verification}
In order to identify all the maximal patterns, we need to perform dominance verification among the co-movement patterns in $\mathbb{R}$ and eliminate those that are dominated.  A naive approach is to examine each  $CP_i=\langle O_i,P_i\rangle \in\mathbb{R}$ whether there exists a pattern  $CP_j=\langle O_j,P_j\rangle\in\mathbb{R}$ that dominates it, by verifying whether $O_i\subseteq O_j$ and $P_i\subseteq P_j$. However, this approach is inefficient and entails a considerable amount of unnecessary examination cost when $CP_i$ is not dominated by $CP_j$.

In order to enhance efficiency in dominance verification, we put forward a hashing-based approach. Rather than comparing each $CP_i$ against all other patterns in $\mathbb{R}$, we need only examine whether $CP_i$ is dominated by $CP_j$ with $O_i=O_j$ or $P_i=P_j$, by virtue of the following lemma:

\begin{lemma}
For any pattern $ CP_i=\langle O_i,P_i\rangle$ in the result set $\mathbb{R}$ generated by Algorithm~\ref{alg:sw-verification}, if $CP_i$ is dominated in $\mathbb{R}$, then there must exist another pattern $CP=\langle O,P\rangle \in \mathbb{R}$ such that $O_i=O$ or $P_i=P$.
\end{lemma}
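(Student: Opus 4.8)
The plan is to first pin down exactly which patterns Algorithm~\ref{alg:sw-verification} writes into $\mathbb{R}$, and then, given a dominated pattern $CP_i$, to \emph{construct} a concrete dominating pattern sharing a coordinate with it rather than merely asserting one exists. The starting observation is that the object set of every generated pattern is an intersection of temporal clusters taken one per camera along its path. Concretely, I would establish the following generation invariant: a pattern $\langle O,P\rangle$ with $P=c_1\rightarrow\ldots\rightarrow c_n$ is produced if and only if there is a choice of a single temporal cluster $TC_t\in\mathbb{TC}_{c_t}$ at each camera $c_t$ of $P$ with $O=\bigcap_{t=1}^{n}TC_t$ and every prefix intersection $\bigcap_{t=1}^{s}TC_t$ ($s\le n$) of size at least $m$; moreover $\langle O,P\rangle$ enters $\mathbb{R}$ precisely when, in addition, $n\ge k$. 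This is read off the enqueue/dequeue mechanics: when $o_r$ is the end anchor of a temporal cluster the window $[o_l,o_r]$ equals that cluster, so $Q[CP_i]$ holds exactly the previous object set intersected with it, while the seeds $\langle TC_j,c_i\rangle$ supply every possible first cluster. A useful corollary is monotonicity: object sets can only shrink as the path is extended.

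With the invariant in hand, suppose $CP_i=\langle O_i,P_i\rangle\in\mathbb{R}$ is dominated by $CP_j=\langle O_j,P_j\rangle\in\mathbb{R}$, so $O_i\subseteq O_j$ and $P_i\subseteq P_j$. If $O_i=O_j$ or $P_i=P_j$ the claim is immediate with $CP=CP_j$, so assume both inclusions are strict. Let $\{TC^i_c\}_{c\in P_i}$ and $\{TC^j_c\}_{c\in P_j}$ be the per-camera clusters witnessing $O_i$ and $O_j$ through the invariant. Since $P_i\subseteq P_j$ as a camera subsequence, I would build a \emph{mixed} selection along $P_j$: take $TC^i_c$ at every camera $c$ lying on $P_i$, and $TC^j_c$ at the remaining cameras of $P_j$. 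Let $O^\ast$ denote the intersection of this mixed selection over all cameras of $P_j$.

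The key computation is a two-sided containment proving $O^\ast=O_i$. On one hand $O^\ast\subseteq\bigcap_{c\in P_i}TC^i_c=O_i$, since restricting an intersection to fewer cameras can only enlarge it. On the other hand $O_i\subseteq O^\ast$: at each $c\in P_i$ we have $O_i\subseteq TC^i_c$ by definition, while at each $c\in P_j\setminus P_i$ we have $O_i\subseteq O_j\subseteq TC^j_c$, so $O_i$ sits inside every factor of the mixed intersection. Hence $O^\ast=O_i$. Every prefix of the mixed intersection contains $O^\ast=O_i$ and thus has size at least $|O_i|\ge m$, and $|P_j|\ge k$ because $CP_j\in\mathbb{R}$; the invariant then yields $\langle O_i,P_j\rangle\in\mathbb{R}$. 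This pattern has object set $O=O_i$ and dominates $CP_i$ (identical object set, $P_i\subseteq P_j$), which is exactly the witness the lemma demands.

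The hard part is the generation invariant, not the containment arithmetic. I would need to argue that Algorithm~\ref{alg:sw-verification} materializes \emph{every} valid combination of per-camera temporal clusters (not only the maximal object sets), that the witnessing clusters $TC^i_c$ actually lie inside the meta-clusters traversed by $P_j$—so they are genuinely visited when $P_j$'s sequence is processed—and that the incremental chain realizing the mixed combination survives the $|Q[CP_i]|\ge m$ test at every camera so it is never discarded before the path reaches $c_n$. The monotonicity corollary makes the last point routine here, since all prefixes contain $O_i$ and hence stay above $m$; the remaining care is purely in translating the queue-based description of the algorithm into the static intersection form used above.
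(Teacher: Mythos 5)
Your proof is correct in substance but takes a genuinely different route from the paper's. The paper argues by contradiction and manufactures a witness sharing the \emph{path}: it takes the parent candidate $CP_u=\langle O_u,\, c_1\rightarrow\ldots\rightarrow c_{n-1}\rangle$ from which $CP_i$ was generated, intersects $O_u$ with a temporal cluster $O_c\supseteq O_j$ at the last camera $c_n$ of $P_i$, and exhibits $\langle O_c\cap O_u, P_i\rangle\in\mathbb{R}$ — the whole argument lives at a single camera. You instead manufacture a witness sharing the \emph{object set}: a mixed per-camera cluster selection along all of $P_j$ whose intersection you show collapses to exactly $O_i$, giving $\langle O_i,P_j\rangle\in\mathbb{R}$. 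The two proofs land in different branches of the lemma's disjunction, which is all the two hash maps $HTP$ and $HTO$ need. What each buys: the paper's argument is shorter and local, but it silently assumes the constructed pattern differs from $CP_i$ — since $O_j$ need not be contained in $O_u$, it can happen that $O_c\cap O_u=O_i$, in which case the ``new'' pattern is $CP_i$ itself and the contradiction evaporates; repairing this essentially forces one to use the prefix of $CP_j$'s own generating chain, which is your construction in disguise. Your witness has a strictly longer path than $P_i$, so distinctness is automatic. The price you pay is the full generation invariant — that Algorithm~\ref{alg:sw-verification} materializes \emph{every} admissible per-camera cluster selection along a frequent sequence, and that the clusters witnessing $O_i$ lie inside the very meta-clusters traversed when $P_j$'s sequence is processed (true, because both selections contain the objects of $O_i$ and each object maps to a unique meta-cluster per camera). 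You correctly flag these as the load-bearing steps, and they do go through.
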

\begin{proof}
We prove by contradiction. Let us assume $CP_i$ is dominated by a pattern $CP_j=\langle O_j,P_j\rangle \in \mathbb{R}$ and we have $O_i\subseteq O_j$ and $P_i\subseteq P_j$. If there is no pattern $CP=\langle O,P\rangle \in \mathbb{R}$  with $O_i=O$ or $P_i=P$, we can conclude that $O_i\neq O_j$ and $P_i\neq P_j$ (i.e., $O_i\subset O_j$ and $P_i\subset P_j$). 

Without loss of generality, we denote $P_i=c_1\rightarrow c_2\rightarrow \ldots\rightarrow c_n$. Since $CP_j=\langle O_j,P_j\rangle$  is a valid pattern, there exists a temporal cluster with object set $O_c\supseteq O_j$ for camera $c_n$. On the other hand, since $O_i$ is a pattern in $\mathbb{R}$, we assume it is generated by the candidate pattern $CP_u=\langle O_u,P_u\rangle$ with $O_u\supseteq O_i$ and $P_u=c_1\rightarrow c_2\rightarrow \ldots\rightarrow c_{n-1}$. Since $O_c$ is also a temporal cluster in camera $c_n$, we know from Algorithm~\ref{alg:sw-verification} that $O_c$ and candidate pattern $CP_u$ can lead to a new pattern $\langle O_c\bigcap O_u,P_i\rangle$, which dominates $CP_i$ because $O_i\subseteq O_c\bigcap O_u$ and their travel path is identical. The pattern $\langle O_c\bigcap O_u,P_i\rangle$ will be added into $\mathbb{R}$ and thus, we find a pattern that leads to a contradiction.
\end{proof}

In this way, we can build two hash maps $HTP$ and $HTO$. $HTP$ uses the travel path as the key and $HTO$ uses the object set as the key. For each candidate pattern $CP_i$, we only need to compare with the patterns in $HTP[P_i]$ and $HTO[O_i]$.

\subsection{Complexity Analysis}

\noindent \textbf{Frequent subsequence mining.} To build TCS-tree, temporal clustering is performed within each camera with linear complexity. Given $N$ objects with average travel path $L$, the complexity is $O(NL)$. Since the merge operation to generate meta-clusters, the construction of suffix tree, and applying suffix tree for frequent subsequence mining are all linear~\cite{ukkonen1995line}, the total complexity in this module is $O(NL)$.

\noindent \textbf{Sliding-window based verification.} Given $M$ frequent subsequences generated from the suffix tree, we need to perform the sliding-window based verification for each sequence of meta-clusters, whose average length is assumed to be $\bar{k}$. From Algorithm~\ref{alg:sw-verification}, we know that the processing cost of each meta-cluster (lines $3$-$21$) is linearly correlated with inverted index size and meta-cluster size. Suppose there are $b$ candidate patterns in the buffer and each contains an average of $\bar{m}$ objects. Let $c$ be the average size of a meta-cluster. Then, the cost of sliding-window based verification is $O(M\bar{k}(b\bar{m}+c))$. 

\noindent\textbf{Hashing-based dominance eliminator}. The complexity of this component is straightforward to estimate. For each candidate pattern $CP_i$, we only need to compare with the patterns in $HTP[P_i]$ and $HTO[O_i]$. So the complexity is determined by the length of $HTP[P_i]$ and $HTO[O_i]$ and the examination cost of subset relationship.

\subsection{Extension to Overlapped Cameras}
In real-world surveillance camera systems, a moving object could be captured by multiple cameras simultaneously, especially in the area of road intersections. To handle the scenario, our idea is to cluster the cameras with overlapped view into a virtual camera V. For any camera $c_i\in V$, there exists $c_j\in V$ such that $c_i$ and $c_j$ have overlapped view. For any two cameras from different clusters, they are not overlapped. Suppose a moving object is captured by multiple cameras within $V$, which we denote as $(c_i,[s_i,e_i]), \ldots, (c_j,[s_j,e_j])$ and these time intervals are overlapped. We can replace them as \blue{$(V, [\min(s_i,\ldots,s_j), \max(e_i,\ldots,e_j)])$}. Hence,  the travel path of each object is represented as a sequence of camera cluster IDs and the associated time intervals. We can still apply the proposed mining algorithms in this paper on the new data model to identify co-movement patterns of grouped objects. 

\eat{
\begin{table*}[t!]
\begin{center}
\caption{Dataset statistics.}
\begin{tabular}{|c|c|c|c|c|c|c|} \hline
& Camera Nodes & Camera Edges & Object Number & Avg Travel Path & Avg staying time & Avg time span for two cameras\\ \hline
Chengdu &  & & & & & \\ \hline
Singapore & 37370 & 540340 & 2756 & 893 & 7 & 71\\ \hline
\end{tabular}
\label{tbl:dataset}
\end{center}
\end{table*}
}

\section{Experiments}\label{sec:exp}

\subsection{Experimental Setup}
\noindent\textbf{Datasets}. Since there lack large-scale and publicly-accessible trajectories recovered from videos, we construct two types of datasets for co-movement pattern mining. In the first group of datasets, we use real GPS trajectories and road network to generate approximate trajectories recovered from videos. We can deploy a specified number of cameras randomly on the road network. The position and view field of each camera are considered as prior knowledge. From the GPS trajectories, we can roughly estimate the entrance time and exit time from a camera according to the travel speed estimation and the attributes of the camera. In this way, we can convert a GPS trajectory  into a sequence of camera IDs and the time interval captured by each camera is also available. In our implementation, we use real trajectory datasets of DIDI Chengdu~\cite{DBLP:journals/pvldb/TongZZCYX18} and Singapore Taxi~\cite{DBLP:journals/pvldb/FanZWT16} as well as their road network to construct two video trajectories for co-movement pattern mining.

In the second group of dataset, we start from raw videos and employ the end-to-end mining pipeline presented in Section~\ref{sec:pipeline} to evaluate the effectiveness of co-movement pattern mining from videos. We adopt Visual Road~\cite{DBLP:conf/sigmod/HaynesMBCC19}, a tool to generate synthetic video benchmark datasets. In this dataset, we generate $1169$ cameras, with $1833$ vehicles moving across the road network.  We apply recent trajectory recovery algorithm~\cite{DBLP:conf/kdd/YuAYZWL22} on these raw videos to generate a sequence of camera id for each vehicle. As the counterpart, we can also derive the GPS trajectories of vehicles from the simulation engine of Visual Road. These GPS trajectories will be used as groundtruth for effectiveness evaluation.

\noindent\textbf{Comparison Approaches}. We compare the proposed \textbf{TCS-tree} with the two baselines derived from previous co-movement pattern miners, namely \textbf{CMC} and \textbf{Apriori}. In addition, we also incorporate the baseline presented in Section~\ref{sec:base-suffix-tree}. The algorithm, denoted by \textbf{FSM}, adopts suffix tree and our proposed sequence-ahead mining framework to mine frequent subsequences. It does not leverage our improved strategies proposed in Section~\ref{sec:tcs-tree}.

\noindent\textbf{Parameter Setup}. We examine the scalability of the proposed algorithms in terms of the six parameters listed in Table~\ref{tbl:query-param}. Among them, $k$, $m$, $\epsilon$ are related to the definition of a co-movement pattern. The remaining parameters are related to dataset settings, including moving object number, camera number and the average travel length per object. We use the same settings for query parameters $k$, $m$ and $\epsilon$ for pattern mining in Chengdu and Singapore datasets. However, the parameters related to dataset settings are different. Overall, the Singapore dataset has \blue{a much} smaller number of distinct objects, but with longer travel path per vehicle. The default parameters are highlighted in bold.

\begin{table}[t!]
\small
\begin{center}
\caption{Parameter setup for scalability analysis.}
\vspace{-3mm}
\begin{tabular}{|c|c|l|} \hline
\multirow{2}{*}{Query path length $k$} & Singapore &  \multirow{2}{*}{$2$, $3$, $4$, $\mathbf{5}$, $6$, $7$, $8$}   \\ \cline{2-2}
& Chengdu &  \\ \hline
\multirow{2}{*}{Group size $m$} & Singapore &  \multirow{2}{*}{$2$, $\mathbf{3}$, $4$, $5$, $6$, $7$, $8$}   \\ \cline{2-2}
& Chengdu & \\ \hline
\multirow{2}{*}{Threshold $\epsilon$} & Singapore &  
 \multirow{2}{*}{$40$, $50$, $\mathbf{60}$, $70$, $80$, $90$, $100$}  \\ \cline{2-2} & Chengdu & \\ \hline
\multirow{2}{*}{Camera number} & Singapore & $10$k, $15$k, $20$k, $25$k, $30$k, $\mathbf{35k}$  \\ \cline{2-3}
& Chengdu & $3$k, $4$k, $5$k, $6$k, $7$k, $8$k, $\mathbf{9k}$ \\ \hline
\multirow{2}{*}{Object number} & Singapore &  $1.5$k, $2.1$k, $2.3$k, $2.5$k, $\mathbf{2.7k}$ \\ \cline{2-3}
& Chengdu & $6$k, $7$k, $8$k, $9$k, $10$k, $11$k, $\mathbf{12k}$ \\ \hline
Average path length & Singapore &  $200$, $300$, $400$, $500$, $600$, $700$, $\mathbf{800}$  \\ \cline{2-3}
($\#$cameras per vehicle) & Chengdu & $90$, $100$, $110$, $120$, $130$, $140$, $\mathbf{150}$ \\ \hline
\end{tabular}
\label{tbl:query-param}
\end{center}
\vspace{-5mm}
\end{table}

All the algorithms are implemented in C++ language and conducted on the server with a 3.20 GHz i9-12900K CPU, 128GB of main memory and 1 TB hard drive. The server runs a Ubuntu Linux release 20.04.5.

\subsection{Scalability Analysis}


\noindent\textbf{Increasing path length $\mathbf{k}$}. In the first experiment, we report the latency of co-movement pattern mining w.r.t. increasing $k$. As shown in Figure~\ref{exp:inc-k-time}, the two baselines are inefficient, with their execution time being at least $100$ times greater than that of our proposed TCS-tree approach. They need to enumerate and verify an enormous number of candidate patterns, necessitating frequent operations of object set intersection. To investigate this phenomenon further, we examine the frequency of object set intersections incurred during the pattern mining process. The findings reveal a direct correlation between this indicator and the overall execution time.  Apriori and CMC trigger  billions of set intersection operations, which become the primary performance bottleneck.  Moreover, we observe that these two baselines are not sensitive to $k$. Since the mining task intends to find non-dominated maximal patterns, they need to enumerate all the candidate paths with varying length.

The FSM algorithm incorporates our proposed sequence-ahead mining framework, surpassing the performance of the baselines and affirming the effectiveness of leveraging suffix trees to filter out false positive candidates. Moreover, as $k$ increases, both FSM and TCS-tree demonstrate enhanced pruning capabilities. The frequent subsequence mining technique expeditiously filters a substantial number of candidate patterns with lengths smaller than $k$. Our TCS-tree method  achieves a significant improvement over FSM. Its index is based on sequences of meta-clusters, which are more fine-grained than the camera sequences in FSM. In addition, it incorporates a sliding-window-based verification and a hashing-based dominance eliminator to further improve efficiency.

\begin{figure}[h!]
  \centering
  \subfigcapskip=-6pt
  \setlength{\abovecaptionskip}{-2pt}
  \subfigure[Chengdu]{
    \includegraphics[width=0.215\textwidth,height=0.14\textheight]{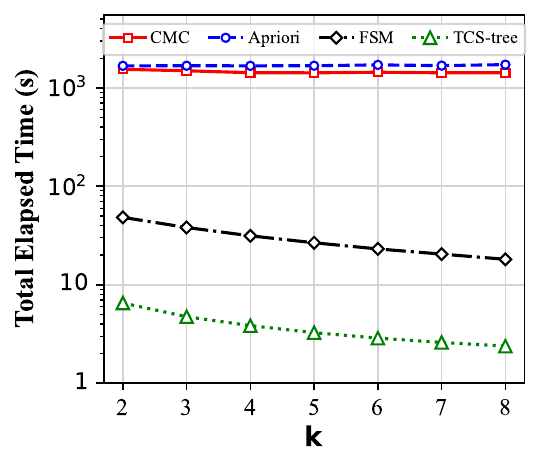}
  }
  \subfigure[Singapore]{
    \includegraphics[width=0.215\textwidth,height=0.14\textheight]{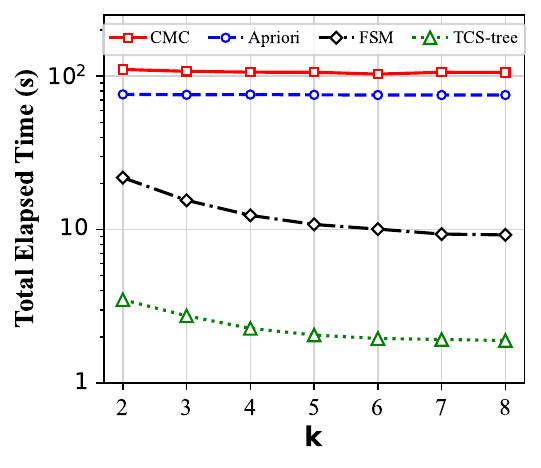}
  }\vskip -6pt

  \subfigure[Chengdu]{
    \includegraphics[width=0.215\textwidth,height=0.14\textheight]{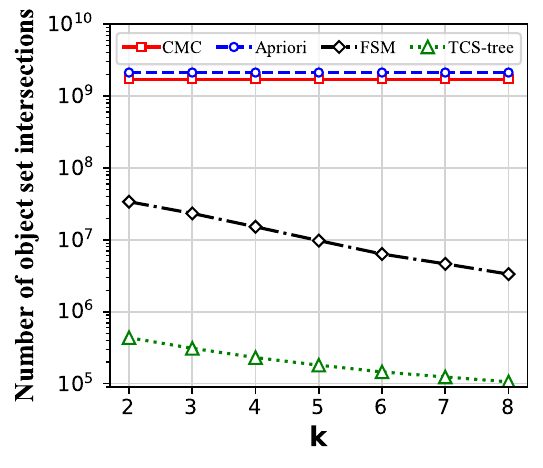}
  }
  \subfigure[Singapore]{
    \includegraphics[width=0.215\textwidth,height=0.14\textheight]{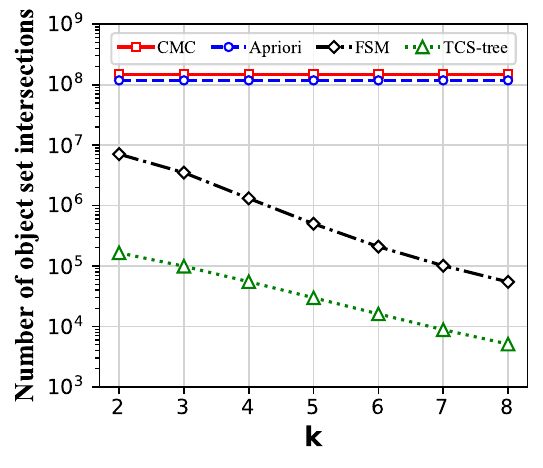}
  }
  \caption{Increasing query path length $k$.}
  \label{exp:inc-k-time}
\end{figure}

\noindent\textbf{Increasing group size $\mathbf{m}$}. The results with increasing group size $m$ are reported in Figure~\ref{exp:inc-m}. It is interesting to find that CMC and Apriori are very sensitive to parameter $m$. When $m$ is large, the number of candidate patterns undergoes a significant reduction, enabling them to even surpass the performance of the FSM algorithm. This behavior can be attributed to the fact that CMC and Apriori retain candidate patterns comprising a minimum of $m$ objects.  When $m$ increases, the number of candidate patterns is sharply reduced, resulting in much lower computation cost for object set intersection. For FSM and TCS-tree, their performances initially benefit from the increase of $m$ because fewer frequent subsequences are generated. When $m$ continues to grow, there exist a small number of candidate patterns and the total computation overhead is dominated by the index construction and frequent subsequence mining.

\begin{figure}[h!]
  \centering
  \subfigcapskip=-6pt
  \setlength{\abovecaptionskip}{-2pt}
  \subfigure[Chengdu]{
    \includegraphics[width=0.215\textwidth,height=0.14\textheight]{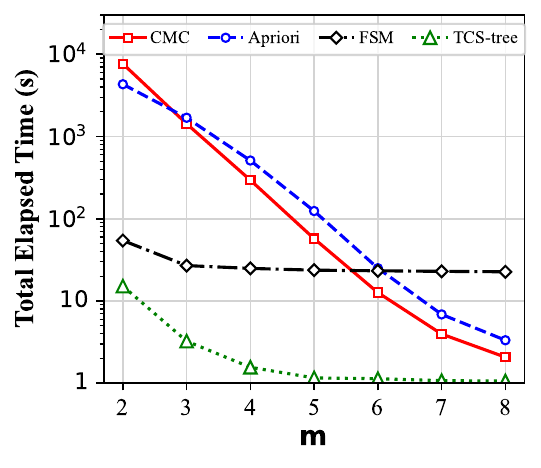}
  }
  \subfigure[Singapore]{
    \includegraphics[width=0.215\textwidth,height=0.14\textheight]{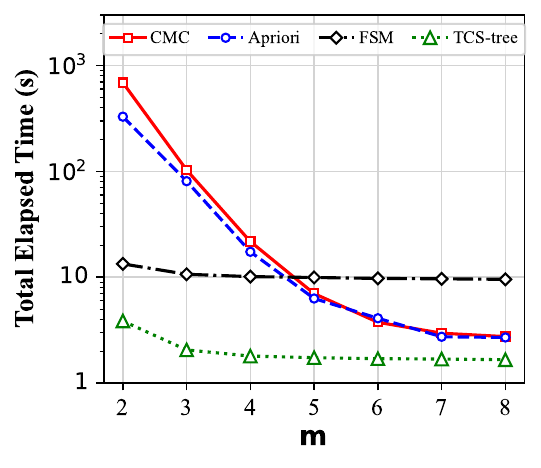}
  }
  \caption{Increasing $\mathbf{m}$.}
  \label{exp:inc-m}
\end{figure}

\noindent\textbf{Increasing proximity threshold $\mathbf{\epsilon}$}. As $\mathbf{\epsilon}$ increases, there are more valid candidate patterns, leading to an escalation in the running time of all the algorithms.  Notably, the FSM algorithm is less sensitive to $\epsilon$ because it does not leverage the parameter for candidate pruning in the stage of frequent subsequence mining. 

\begin{figure}[h!]
  \centering
  \subfigcapskip=-6pt
  \setlength{\abovecaptionskip}{-2pt}
  \subfigure[Chengdu]{
    \includegraphics[width=0.215\textwidth,height=0.14\textheight]{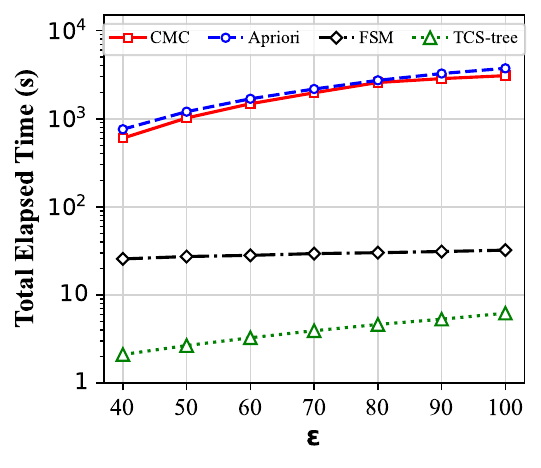}
  }
  \subfigure[Singapore]{
    \includegraphics[width=0.215\textwidth,height=0.14\textheight]{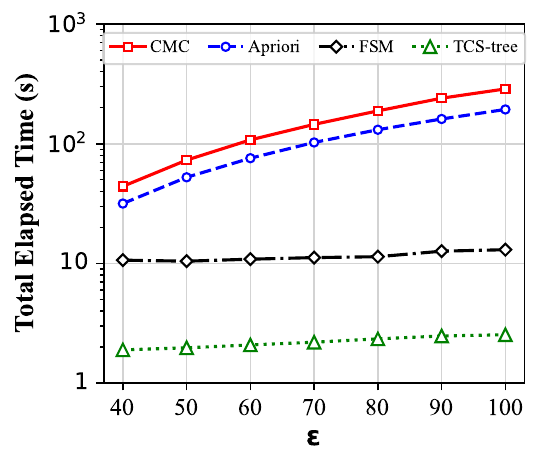}
  }
  \caption{Running time w.r.t. increasing $\mathbf{\epsilon}$.}
  \label{exp:inc-epsilon}
\end{figure}

\noindent\textbf{Increasing dataset-related parameters}. In Figure~\ref{exp:inc-dataset}, we perform scalability analysis w.r.t. concerning the number of cameras, moving objects, and the average length of travel paths for each object. It's not surprising to observe that the figures illustrate an evident ascending trend across all the algorithms.  

Notably, the two baselines display a greater sensitivity to the cardinality of the dataset, including the number of objects and the travel path length of each object. They  exhibit a considerably more pronounced increasing trend in the two factors, when compared with the factor of camera number. Even though CMC applies Apriori algorithm \blue{to} the combination of cameras, the road network ontology remains unchanged. Hence, the deployment of additional cameras in the road network does not yield a significant increase in the number of edges within the derived camera network.


\begin{figure}[h!]
  \centering
  \subfigcapskip=-6pt
  \setlength{\abovecaptionskip}{-2pt}
  \subfigure[Chengdu]{
    \includegraphics[width=0.215\textwidth,height=0.14\textheight]{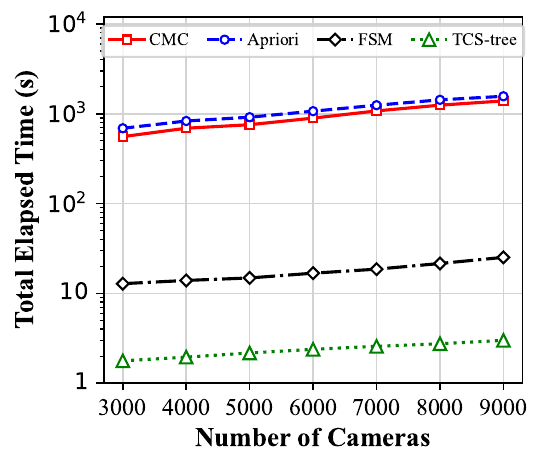}
  }
  \subfigure[Singapore]{
    \includegraphics[width=0.215\textwidth,height=0.14\textheight]{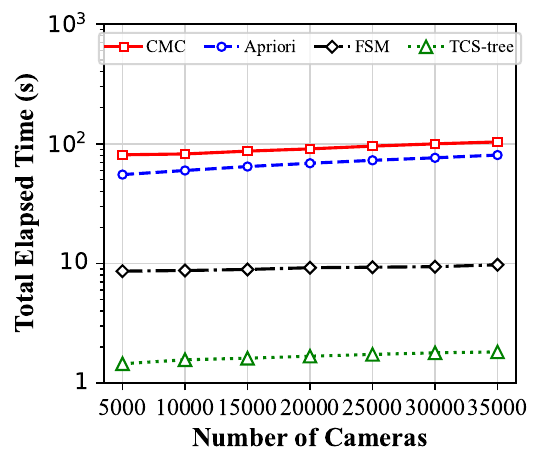}
  }\vskip -6pt
  
\subfigure[Chengdu]{
    \includegraphics[width=0.215\textwidth,height=0.14\textheight]{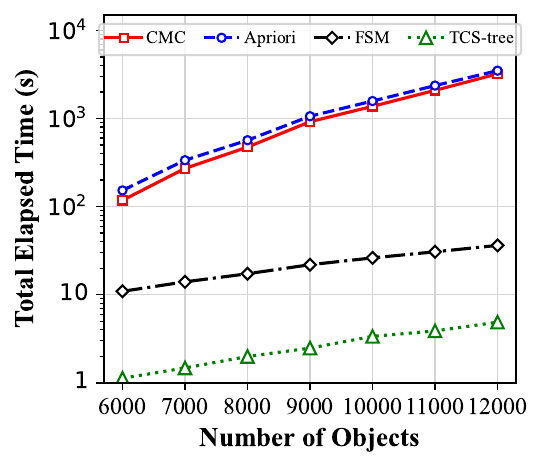}
  }
  \subfigure[Singapore]{
    \includegraphics[width=0.215\textwidth,height=0.14\textheight]{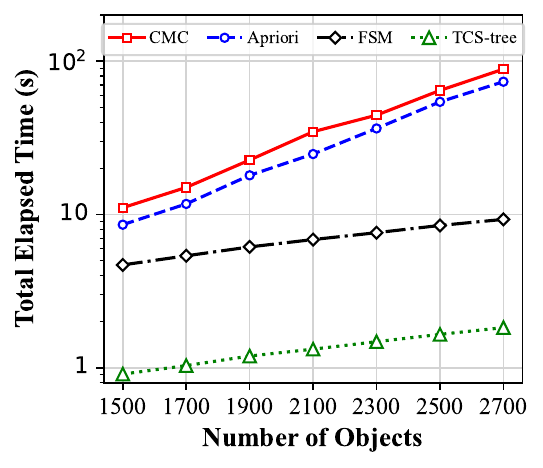}
  }\vskip -6pt
  
   \subfigure[Chengdu]{
    \includegraphics[width=0.215\textwidth,height=0.14\textheight]{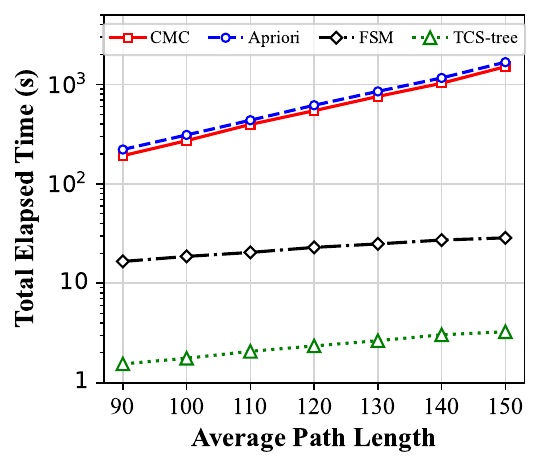}
  }
  \subfigure[Singapore]{
    \includegraphics[width=0.215\textwidth,height=0.14\textheight]{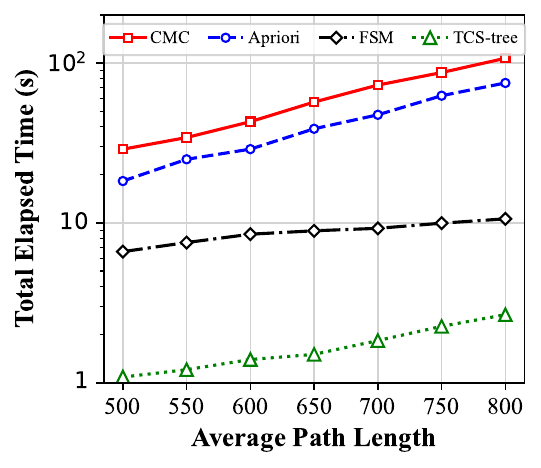}
  }
  \caption{Adjusting dataset-related parameters.}
  \label{exp:inc-dataset}
\end{figure}

\subsection{Break-down \blue{Analysis} and Ablation Study}
This experiment is designed with two goals. Firstly, considering that the TCS-tree algorithm comprises three essential steps, including frequent subsequence mining, verification of candidate patterns, and the elimination of dominated patterns, we conduct a break-down analysis to assess the computational cost associated with each component. Secondly, our technical contributions \blue{originate} from TCS-tree construction, sliding-window-based verification, and hashing-based dominance elimination, \blue{so} we perform an ablation study to evaluate their impact, by devising three variants of our proposed algorithm. The first variant, denoted as TCS-tree-v1, replaces the hashing-based dominance eliminator with the one employed in the baseline methods. The second variant, known as TCS-tree-v2, substitutes the sliding-window-based verification with CMC algorithm. The third variant, namely TCS-tree-v3, builds TCS-tree on top of sequences of camera ids instead of meta-clusters and applies our proposed verification and de-dominance techniques. The goal is to identify the pruning power of fine-grained representation for frequent subsequence mining.  

We run these algorithms with default parameter settings in Table~\ref{tbl:query-param}. From the running time results in Figure~\ref{fig:bk_time},the verification module incurs the highest computational cost, accounting for approximately $75\%$ of the total processing time in the Chengdu dataset. Disabling the hashing-based dominance eliminator in TCS-tree-v1 leads to an almost twofold cost increase in this component. Similarly, removing the sliding-window based verification in TCS-tree-v2 results in a significant cost escalation. The fine-grained data model that represents each object as a sequence of meta-clusters plays the most important role. Substituting this model with sequences of camera IDs dramatically increases the expense of frequent subsequence mining. The reason is that in the original TCS-tree, meta-clusters with size smaller than $m$ can be removed, yielding shorter sequences compared to TCS-tree-v3 and reducing the computational cost for suffix tree construction and frequent subsequence mining. TCS-tree-v3 incurs higher costs for verification and dominance elimination due to its generation of a greater number of candidate patterns. To provide a baseline comparison, we also included FSM in the figure, which lacks the fine-grained data model, sliding-window based verification, and hashing-based dominance eliminator.


\begin{figure}[h!]
  \centering
  \subfigcapskip=-6pt
  \setlength{\abovecaptionskip}{-2pt}
  \subfigure[Chengdu]{
    \includegraphics[width=0.22\textwidth,height=0.14\textheight]{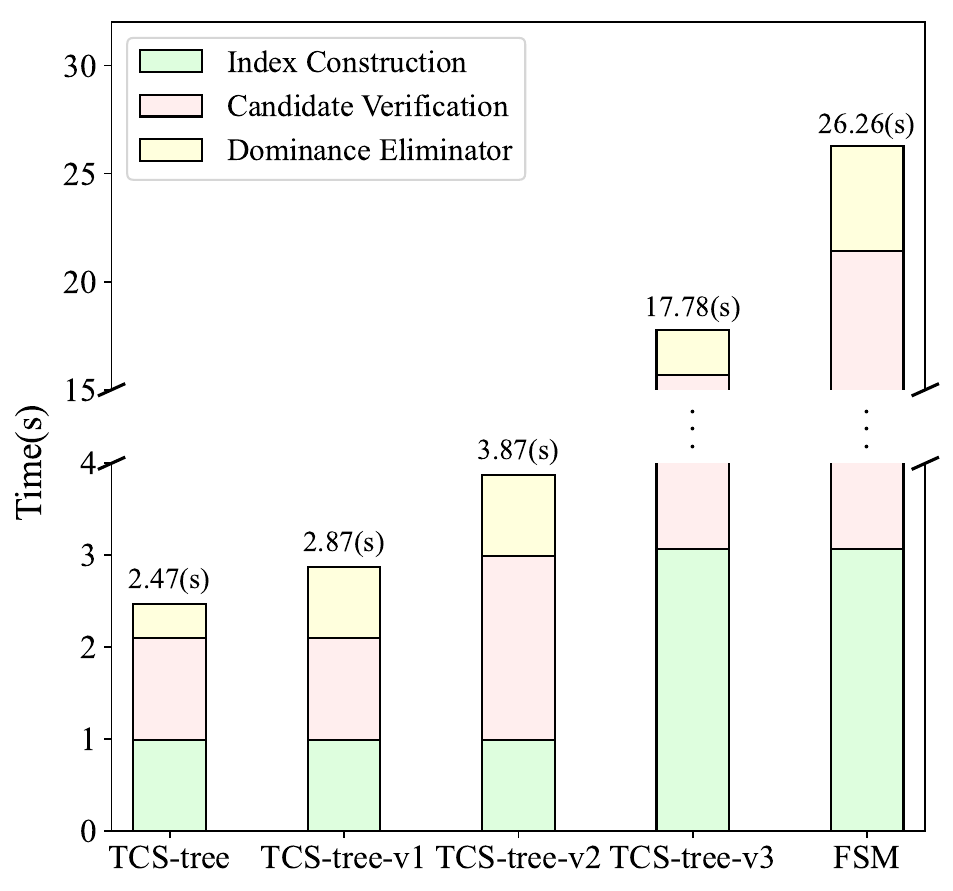}
  }
  \subfigure[Singapore]{
    \includegraphics[width=0.22\textwidth,height=0.14\textheight]{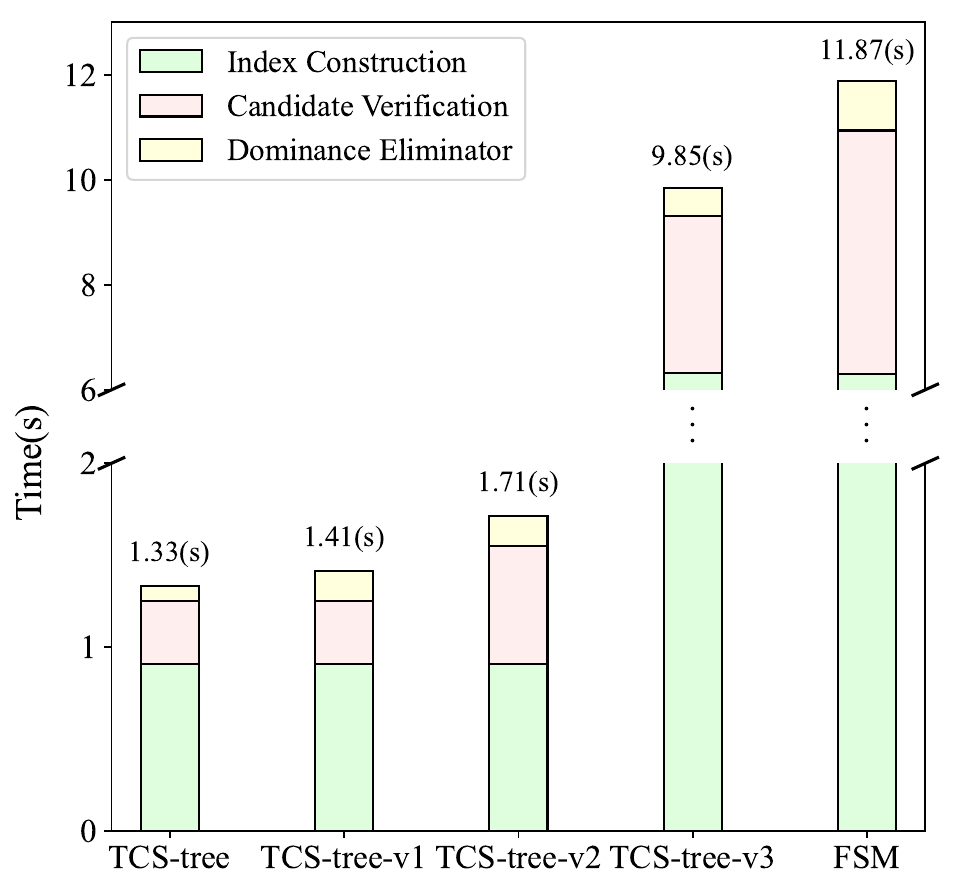}
  }
  \caption{Break-down analysis and ablation study.}
  \label{fig:bk_time}
 \vspace{-4mm}
\end{figure}


\subsection{Effectiveness Analysis}
In the final experiment, our goal is to examine the congruence between the co-movement patterns derived from video recordings and those generated by GPS trajectories. In our datasets generated by Carla, the GPS trajectories can be predetermined by the simulation engine, thereby enabling us to employ them as a reference to generate accurate patterns. Regarding the video data, we utilize the TCS-tree methodology on two distinct data sources. In the first data source, we adopt an existing map matching algorithm to convert the GPS trajectories into camera sequences, which are referred to as oracle travel paths. This is equal to assuming the existence of a perfect trajectory recovery algorithm that generates correct sequences for each moving object. In the second source, we implement the pipeline outlined in Section~\ref{sec:pipeline}, wherein the TCS-tree approach is employed to identify co-movement patterns based on imperfect trajectories extracted from the videos.

We employ the flock pattern mining algorithm ~\cite{DBLP:conf/gis/VieiraBT09} to establish the groundtruth for co-movement patterns derived from GPS trajectories. The flock pattern algorithm necessitates the consideration of spatial proximity between pairs of objects, which aligns with our requirement for $\epsilon$-reachability between object pairs. We utilize the $F_1$-score as the performance metric. A co-movement pattern $CP_1$ generated by our algorithm is deemed to match $CP_2$ from the groundtruth patterns if they both encompass the same set of objects, while ensuring that the intersection over union ($IoU$) between their respective time spans exceeds the threshold of $0.8$.

Regarding the parameters associated with flock patterns, we set  the minimum group size as $2$, the minimum time span as $10$ seconds, and the disk radius as $100$ meters. It takes $36.6$ seconds to mine flock patterns from GPS data, whereas TCS-tree only consumes $0.2$ seconds for co-movement pattern mining from the extracted camera sequences. As to pre-processing step for trajectory recovery from videos, it takes the method proposed in~\cite{DBLP:conf/kdd/YuAYZWL22} $3.5$ hours to perform iterative clustering on the snapshots captured by the cameras. The objects within the same cluster are considered to be the same entity. Fortunately, this pre-processing step is executed only once. Furthermore, in addition to facilitating co-movement pattern analysis, the output of this step can be effectively leveraged to support diverse downstream applications.

As shown in Figure~\ref{fig:F-score}, we report the $F_1$-scores with varying travel path length among the detected patterns. Notably, when executing the TCS-tree algorithm on the oracle camera sequences derived from GPS trajectories, the attained $F_1$-scores  are close to $1$. This outcome suggests a remarkable similarity between co-movement pattern mining driven by video analysis and conventional GPS-driven approaches. Hence, in scenarios where GPS data is inaccessible, a sole reliance on video data sources for co-movement pattern mining emerges as a viable alternative. In practical applications, it is worth noting that trajectory recovery algorithms such as~\cite{DBLP:conf/kdd/YuAYZWL22} inherently generate imperfect travel paths. Consequently, when employing the pipeline introduced in Section~\ref{sec:pipeline} on raw video footage, the  $F_1$-scores exhibit a noticeable decrease, yet they remain within an acceptable range.

\vspace{-3mm}
\begin{figure}[h!]
	\centering
	\includegraphics[width=0.38\textwidth]{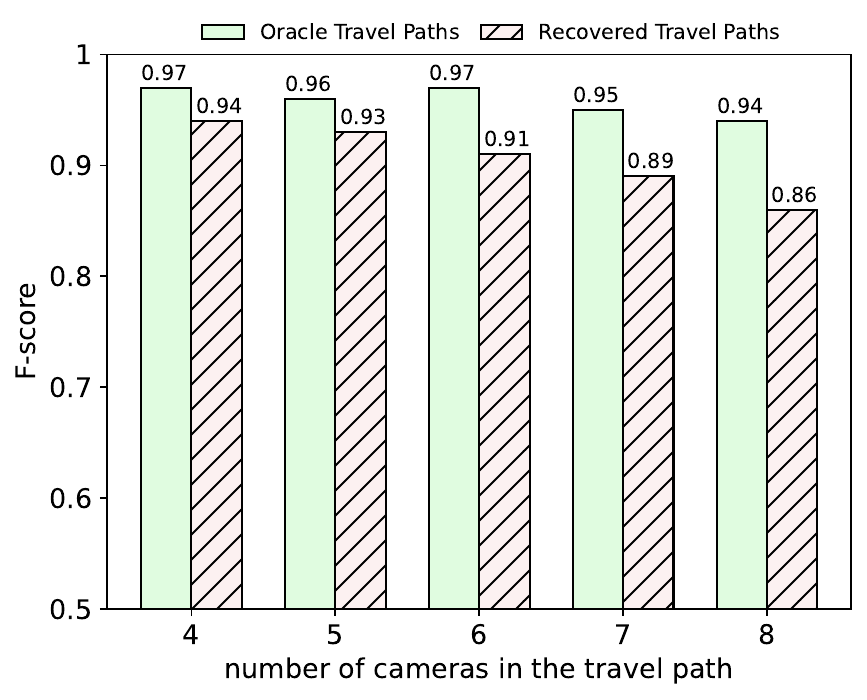}
  \vspace{-3mm}
	\caption{Effectiveness analysis for video-based co-movement pattern mining.}
	\label{fig:F-score}
 \vspace{-5mm}
\end{figure}


\section{Conclusion and Future Study}\label{sec:conclude}

The study at hand makes the first attempt at co-movement pattern mining from video data. We formulate the problem and theoretically prove is NP-hardness. By adopting ideas from existing works, we develop two baselines. We also propose a novel index TCS-tree and an efficient verification strategy based on sliding window and a hashing-based dominance eliminator. Experimental results validate the efficiency of the proposed index and mining algorithm.

While this work is commendable, there remain certain limitations that indicate a wide range of untapped opportunities. Firstly, the focus of the present study is a specific case of co-movement patterns. Future research can aim to explore diversified co-movement pattern definitions. Moreover, it would be worthwhile to consider incorporating temporal gaps as in platoon~\cite{DBLP:journals/dke/LiBK15} into the analysis in subsequent investigations. Secondly, our work builds on trajectories recovered by existing algorithms, which may contain inaccuracies. In light of this, fuzzy pattern mining is an intriguing alternative to extract insights from imperfect trajectories. Thirdly, we note that while co-movement pattern mining has previously been explored in a streaming-based context~\cite{DBLP:journals/pvldb/ChenGFMJG19,DBLP:conf/sigmod/FangGPCMJ20}, the online pattern mining of co-movement patterns from video data remains largely unexplored, making it a promising avenue for future research. Finally, our study extracts discrete trajectories from video data. It would be interesting to explore co-movement pattern mining using alternative sensory data with approximate or discrete trajectories. For example, bluetooth-based contact tracing devices  infer user proximity via Bluetooth signal strength. In this scenario, we are aware of the proximity relationship between user pairs, without knowing their explicit positions. This results in a completely different data model and requires new problem formulation and mining algorithms. Additionally, bluetooth data enables indoor positioning based on proximity to mounted beacons, while cellular data approximates positions relative to nearby base stations.  We can explore the extension of existing GPS-based co-movement pattern mining methods to these two scenarios with new proximity constraint on the approximate position information.  
\bibliographystyle{ACM-Reference-Format}
\bibliography{sample-base}


\end{document}